\documentclass[letterpaper, 10 pt, conference]{ieeeconf}  

\IEEEoverridecommandlockouts                              

\usepackage{cite}
\usepackage{amsmath,amssymb,amsfonts}
\usepackage{graphicx}
\usepackage{hyperref}
\hypersetup{hidelinks=true}
\usepackage{textcomp}

\usepackage{amsmath,amssymb,amsfonts}

\usepackage{graphicx}      
\usepackage{siunitx}
\usepackage{pgfplots}
\usepackage{tikz,amsmath}
\usepgfplotslibrary{fillbetween}
\usetikzlibrary{calc}
\usepackage{tikzscale}
\usepackage{algpseudocode}
\usepackage{algorithm}
\usepackage{subcaption} 
\usepackage{eso-pic}
\usepackage{mathdots}
\usepackage{accents}
\usepackage{array}
\usepackage{makecell}
\usepackage{comment}

\usetikzlibrary{intersections}
\usepackage{tkz-euclide}
\usetikzlibrary{shapes,arrows}
\usetikzlibrary{positioning}
\usetikzlibrary{spy}

\let\labelindent\relax
\usepackage[shortlabels]{enumitem}

\newtheorem{thm}{Theorem}
\newtheorem{lem}{Lemma}

\newtheorem{rem}{Remark}
\newtheorem{ass}{Assumption}

\pgfplotsset{compat=newest}
\pgfplotsset{plot coordinates/math parser=false}
\newlength\figureheight
\newlength\figurewidth

\newlength\defcolwidth

\newcommand*{\tran}{^{\mkern-1.5mu\mathsf{T}}}

\usepackage{tikz}
\usepackage{tikzscale}

\definecolor{myorange}{cmyk}{0,0.35,0.85,0} 
\definecolor{mypurple}{cmyk}{0.5,1,0,0} 

\definecolor{matblue1}{rgb}{0,0.4470,0.7410}
\definecolor{matred1}{rgb}{0.85,0.325,0.098}
\definecolor{matyel1}{rgb}{0.9290, 0.6940, 0.1250}
\definecolor{matpur1}{rgb}{0.4940, 0.1840, 0.5560}
\definecolor{matgre1}{rgb}{0.4660, 0.6740, 0.1880}
\definecolor{matblue2}{rgb}{0.3010, 0.7450, 0.9330}
\definecolor{matred2}{rgb}{0.6350, 0.0780, 0.1840}
\definecolor{matgrey1}{rgb}{0.5, 0.6, 0.7}
\definecolor{matpink1}{rgb}{1, 0.07, 0.65}
\definecolor{matblue3}{rgb}{0.07, 0.62, 1}

\newcommand{\yeldashdot}{\raisebox{2pt}{\tikz{\draw[-,matyel1,densely dash dot,line width = 0.9pt](0,0) -- (3mm,0);}}}
\newcommand{\purdashdot}{\raisebox{2pt}{\tikz{\draw[-,matpur1,densely dash dot,line width = 0.9pt](0,0) -- (3mm,0);}}}

\newcommand{\purdots}{\raisebox{2pt}{\tikz{\draw[-,matpur1,densely dotted,line width = 0.9pt](0,0) -- (3mm,0);}}}

\newcommand{\reddots}{\raisebox{2pt}{\tikz{\draw[-,matred1,densely dotted,line width = 0.9pt](0,0) -- (3mm,0);}}}

\newcommand{\reddash}{\raisebox{2pt}{\tikz{\draw[-,matred1,dashed,line width = 0.9pt](0,0) -- (3mm,0);}}}
\newcommand{\bluedash}{\raisebox{2pt}{\tikz{\draw[-,matblue1,dashed,line width = 0.9pt](0,0) -- (3mm,0);}}}

\newcommand{\yeldash}{\raisebox{2pt}{\tikz{\draw[-,matyel1,dashed,line width = 0.9pt](0,0) -- (3mm,0);}}}

\newcommand{\blackline}{\raisebox{2pt}{\tikz{\draw[-,black,solid,line width = 0.9pt](0,0) -- (3mm,0);}}}
\newcommand{\blueline}{\raisebox{2pt}{\tikz{\draw[-,matblue1,solid,line width = 0.9pt](0,0) -- (3mm,0);}}}
\newcommand{\redline}{\raisebox{2pt}{\tikz{\draw[-,matred1,solid,line width = 0.9pt](0,0) -- (3mm,0);}}}

\newcommand{\yelline}{\raisebox{2pt}{\tikz{\draw[-,matyel1,solid,line width = 0.9pt](0,0) -- (3mm,0);}}}
\newcommand{\purline}{\raisebox{2pt}{\tikz{\draw[-,matpur1,solid,line width = 0.9pt](0,0) -- (3mm,0);}}}
\newcommand{\greenline}{\raisebox{2pt}{\tikz{\draw[-,matgre1,solid,line width = 0.9pt](0,0) -- (3mm,0);}}}

\newcommand{\blackfulldot}{\raisebox{.7pt}{\tikz{\draw[-,black,fill=black,solid,line width = 1pt](0,0) circle (.5mm);}}}
\newcommand{\grefulldot}{\raisebox{.7pt}{\tikz{\draw[-,matgre1,fill=matgre1,solid,line width = 1pt](0,0) circle (.5mm);}}}

\graphicspath{{./Figures/}}

\DeclareMathOperator*{\argmax}{arg\,max}

\title{\LARGE \bf
Uncertainty-based perturb and observe \\ for data-driven optimization
}%

\author{Leontine Aarnoudse$^{1}$, Mark Haring$^{2}$, Nathan van de Wouw$^{3}$, and Alexey Pavlov$^{1}$
	\thanks{*This work was supported by the Research Council of Norway (RCN) through the project EXTREME EFFICIENCY: Data-driven optimization of industrial processes in time-varying environments (RCN project nr. 345272).}
	\thanks{$^{1}$Leontine Aarnoudse and Alexey Pavlov are with the Dept. for Geoscience, Norwegian University of Science and Technology, Trondheim, Norway. {\tt\small leontine.i.m.aarnoudse@ntnu.no}}%
	\thanks{$^{2}$Mark Haring is with the Dept. of Mathematics and Cybernetics, SINTEF Digital, Trondheim, Norway.
		}%
	\thanks{$^{3}$Nathan van de Wouw is with the Dept. of Mechanical Engineering, Eindhoven University of Technology, Eindhoven, The Netherlands.
		}%
}%

\begin{document}
	\AddToShipoutPictureBG*{%
	\AtPageUpperLeft{%
		\setlength\unitlength{1in}%
		\hspace*{\dimexpr0.5\paperwidth\relax}
		\makebox(0,-1)[c]{
			\parbox{\paperwidth}{ \centering
				Leontine Aarnoudse, Mark Haring, Nathan van de Wouw, and Alexey Pavlov, \\ Uncertainty-based perturb and observe for data-driven optimization, \\
				\textit{This work has been submitted to the IEEE for possible publication. Copyright may be transferred without notice, \\ after which this version may no longer be accessible.}}}%
}}

\maketitle%
\thispagestyle{empty}%
\pagestyle{empty}%

\setlength\defcolwidth{7.85cm}

\setlength\figurewidth{.9\defcolwidth}
\setlength\figureheight{.7\figurewidth}

\begin{abstract}
Data-based adaptive optimization methods hold great promise for the performance optimization of uncertain, time-varying processes. However, current methods are often based on continuous perturbation which is in general undesired for real-life (e.g., industrial) applications. In this paper, a new uncertainty-based perturb-and-observe method is developed that addresses this limitation and reduces the required number of perturbations, while retaining the capability to track time-varying optima. The method is based on the philosophy of `only perturbing when needed,' and is shown to converge to the optimum under mild conditions. A simulation-based case study on a photo-voltaic solar array demonstrates that it can outperform the standard perturb and observe approach as well as three other data-based optimization methods.
\end{abstract}

\section{Introduction}
\label{sec:intro}

Many industrial processes and systems are challenging to optimize due to their uncertain nature, which leads to performance limitations in terms of, for example, energy efficiency or throughput. Examples include solar arrays \cite{Killi2015}, melting ovens \cite{Yilmaz2023}, oil wells \cite{Krishnamoorthy2016} and dividing-wall distillation columns \cite{Halvorsen2025}. Often, the characteristics of these processes are time-varying because they depend on environmental conditions, feed compositions and upstream variations in the production line. In addition, in-depth understanding of the process as well as sensor measurements are often limited, in the sense that not all relevant process variables can be measured. This leads to large uncertainties when modeling these systems. As a result, robustness requirements limit the attainable performance with model-based approaches due to the conservatism needed to guarantee robustness to model uncertainties.

Model-free optimization methods can overcome some of the performance limitations associated with model-based optimization of uncertain processes. Common methods include extremum seeking control (ESC) \cite{Krstic2000,Scheinker2024} and perturb and observe (P\&O) \cite{Sternby1980,Hussein1995}, both of which actively probe the process to improve the operating conditions. These methods have been successfully applied to, a.o., wind energy systems \cite{Bafandeh2017,Mulders2019}, solar arrays \cite{Femia2005,Killi2015}, drilling \cite{Nystad2022}, ovens \cite{Yilmaz2023} and motion stages \cite{Hazeleger2022a}. 

The most common ESC method is the one discussed in, a.o., \cite{Krstic2000}, where the tunable process parameters are probed using a periodic dither signal, which is used to construct a gradient estimate or a direction of improvement of the performance cost. Typically, a form of averaging is used when obtaining a gradient estimate, such that the influence of measurement disturbances is small. This is often done using finite-order filters such as low-pass or high-pass filters \cite{Krstic2000,Tan2006}, but there exist extensions to, for example, moving average filters for nonlinear systems with a periodic steady-state \cite{Haring2013}. A limitation of ESC is that the required time-scale separation between the plant dynamics, gradient estimator and optimizer can lead to slow optimization, especially if the underlying transient dynamics of the process towards the steady-state are slow. There exist many methods to increase the convergence speed of ESC, for example by speeding up the learning dynamics \cite{Ghaffari2012}, online approximations of the system dynamics \cite{VanKeulen2020}, and replacing some experiments by kernel-based cost function approximations \cite{Weekers2025a}. 

P\&O is a more intuitive and straightforward alternative to ESC. P\&O, also known as hill-climbing, is often applied to problems such as maximum power-point tracking, see, e.g., \cite{Hussein1995,Femia2005,Killi2015}. The key idea is to simply steer the process parameters in one direction as long as this leads to improved performance, and change the direction when the performance degrades. One of the main advantages of P\&O compared to ESC is its intuitive nature, making it more accessible for industries that now often rely on human operators. In addition, P\&O is often faster, since it only requires one perturbation to determine a direction of improvement, and its sensitivity to measurement noise can be reduced by, e.g., averaging over multiple measurements \cite{Sternby1980}.

There are ample examples of successful implementations of ESC and P\&O to various processes, see, e.g., \cite{Yilmaz2023,Marko2023}, but the widespread application of these solutions in industry is still pending. An underlying implementation limitation is that ESC and P\&O involve continuous perturbation of the control parameters, even after the optimum has been found. This leads to perturbation around the optimum, which reduces performance, and increases the wear of machinery, resource use and operating cost unnecessarily.

Approaches that address unnecessary perturbations in model-free optimization often involve reducing the size of the perturbations based on some estimate of the proximity of the optimum. For example, in \cite{Buyukdegirmenci2010}, the step size of P\&O is adapted based on changes in the measured function value, while \cite{Atta2016} scales the perturbation magnitude based on the gradient estimate in ESC. Similarly, \cite{Moura2013,Bafandeh2017} use a Lyapunov-based switching scheme to estimate how close the system is to the optimum and reduce the perturbation in ESC accordingly. A disadvantage of these approaches is that if the perturbation amplitude is reduced to zero, a change in the location of the optimum that does not change the function value at the previous optimum is not detected. Another way to reduce unnecessary perturbations, which is commonly used in P\&O for maximum power point tracking, is to use other parameters, such as the incremental conductance, based on the underlying physical model \cite{Hussein1995,Femia2007,Zhang2009a}. However, this approach is limited to this specific application, and it requires system knowledge that is not always available. In \cite{Weekers2025a}, some optimization steps are made based on kernel-based cost function approximations, but this method does not consider time-varying cost functions.

Although there exist many different approaches to reduce unnecessary perturbations in model-free optimization, a method that effectively tracks time-varying optima without any a-priori model knowledge remains underdeveloped. In this paper, we present \textit{a computationally efficient data-based optimization approach, for which convergence to and tracking of a time-varying optimum can be proven}. In this so-called uncertainty-based perturb and observe (uP\&O), cost function value estimates and their uncertainties are modeled based on prior measurements, and the input (to be optimized) at each time step is selected based on these estimates. By reusing older measurements while increasing their uncertainties, it is possible to track time-varying optima without needing to perturb constantly. These methodological features are key in industrial applications. The main contribution consists of the following aspects:
\begin{itemize}
	\item A new uncertainty-based perturb and observe approach is presented.
	\item A formal analysis shows that the developed uP\&O method converges to and tracks the time-varying optimum under mild conditions.
	\item Simulation results demonstrate that uP\&O uses fewer perturbations to accurately track a time-varying optima compared to standard P\&O. The method is shown to perform similarly to three alternative methods for which convergence is difficult to prove.
\end{itemize}

A preliminary version of the uP\&O approach was presented in \cite{Aarnoudse2025a}, in which the input for each time step follows from optimizing future inputs over a fixed horizon. Additional contributions of the current paper w.r.t. \cite{Aarnoudse2025a} are increased flexibility in modeling the uncertainty, a new method to select the inputs for each time step that is more computationally efficient, and a formal convergence analysis.

This paper is organized as follows. In Section \ref{sec:problem}, the problem is introduced. In Section \ref{sec:approach}, a method is introduced to model time-varying function values and their uncertainties based on measurements, and in Section \ref{sec:inputs}, a method to select suitable inputs that balance exploration (i.e., testing if the optimum is changing) and exploitation (i.e., staying at the current estimate of the optimum) is presented. Convergence conditions for uP\&O are developed in Section \ref{sec:convergence}. In Section \ref{sec:sims}, the method is validated in a simulation-based case study, and finally, conclusions are given in Section \ref{sec:conclusions}.

\textit{Notation:} The sets of real and natural numbers are denoted by $\mathbb{R} $ and $\mathbb{N}$, respectively. Subscripts generally denote discrete time instants (e.g., $u_k$ denotes the value of $u$ at discrete time instant $k$), while superscripts in parentheses denote a member of a discrete set of values (e.g., $u^{(i)}$ denotes the $i$-th element of a set $\{u^{(1)},u^{(2)},...\}$). The ceiling function is denoted by $\lceil x \rceil$, which rounds a number $x \in \mathbb{R}$ to the closest integer larger than $x$.

\section{Problem formulation} \label{sec:problem}

Consider the problem of finding and tracking the maximum (or minimum) of an unknown, single-input single-output (SISO), time-varying function $f_k(u) \: \in \mathbb{R}$, using measurements $y_k \in \mathbb{R}$ of its function value at time index $k\in \mathbb{N}$ for an input $u \in \mathcal{U} \subset \mathbb{R}$. The equidistantly spaced input set is given by $\mathcal{U} = \{u\in \mathbb{R}: (\exists i \in \mathbb{Z})[u=i\Delta_u]\}$ for some positive constant $\Delta_u \in \mathbb{R}_{>0}$. For ease of notation, define $u^{(i)} = i \Delta_u$ for all $i \in \mathbb{Z}$.  Only one measurement of the function can be made at each time instance, and the relation between a function measurement at an input $u_k \in \mathcal{U}$ and the corresponding function value at that input is given by
	\begin{align} \label{eq:yk}
		y_k = f_k (u_k) + \rho \varepsilon_k,
	\end{align}
where $\varepsilon_k \in \mathbb{R}$ is an i.i.d., white noise variable, and $\rho \in \mathbb{R}_{>0}$ is a positive scaling constant. The term $\rho \varepsilon_k$ in \eqref{eq:yk} captures the error between the function value and the corresponding measurement. For later notational convenience, we denote the index of the input by $\iota_k \in \mathbb{Z}$, i.e., $u_k = \iota_k \Delta_u$. Noting that $u^{(i)} = i \Delta_u$ for all $i \in \mathbb{Z}$, we can also write $u_k = u^{(\iota_k)}$. We adopt the following assumption regarding the maximum of the function $f_k(u)$.
\begin{ass} \label{assum:maximum}
	At each time $k \in \mathbb{N}$, there exists a unique maximizer $u_k^* = \argmax_{u\in \mathcal{U}} f_k(u)$.
\end{ass}
Similar to $u_k$, we denote the index of $u_k^*$ by $\iota_k^* \in \mathbb{Z}$ and may write $u_k^* = u^{(\iota_k^*)} = \iota_k^* \Delta_u$.

A common approach to find and track the maximizer $u_k^*$ of an unknown, static, time-varying function is perturb and observe. The main idea is to perturb the input and determine the direction of improvement from the corresponding measured output. The direction of improvement $g_k \in\{-1,1\} $ is determined from two subsequent measurements as
	\begin{align} g_{k+1}=
		\begin{cases}
			g_k, & \text{if }  y_k\geq y_{k-1}, \\
			-g_k & \text{if } y_k < y_{k-1}.
		\end{cases} \label{eq:gk}
	\end{align}
This formula is used to update the input according to
	\begin{align} \label{eq:uk}
		u_{k+1} = u_k + g_{k+1} \Delta_u.
	\end{align}
Thus, the direction in which the input is updated remains constant if the function value increases, and if the function value reduces, the direction is reversed. Note that throughout this paper, maximization is considered, but all algorithms can easily be adapted for minimization by selecting the next input as the one expected to decrease (instead of increase) the function value.
	
P\&O is capable of finding and tracking maxima effectively, but it has two main disadvantages. First, it is sensitive to noise on the measurements $y_k$, which can cause the algorithm to move in the wrong direction. Second, after the optimum is found, the algorithm continues perturbing the input. Although perturbations are needed to keep track of the changing maximum of a time-varying cost function, these continuing, unconditional perturbations are often excessive and may reduce the overall performance, lead to increased wear and tear of the physical system, or result in unnecessary resource usage. This paper aims to develop a computationally efficient perturbation-based optimization method that meets the following requirements:
\begin{enumerate}[R1]
	\item Find and track a time-varying maximum,
	\item in the presence of noise,
	\item while limiting the number of perturbations,
	\item in a computationally efficient manner,
	\item with convergence guarantees.
\end{enumerate}

The method consists of three main parts: in Section \ref{sec:approach}, a model is introduced that is used to obtain function value estimates based on prior measurements while taking into account both noise (R2) and time variation of the cost function. In Section \ref{sec:inputs}, this model is used to determine the input for the next time step (R1 and R3) in a computationally efficient manner (R4). In Section \ref{sec:convergence}, it is shown under what conditions the presented method is guaranteed to converge to a bounded region around the optimum (R5).

\section{Uncertainty-based perturb and observe} \label{sec:approach}

In this section, a new uncertainty-based perturb-and-observe (uP\&O) algorithm is introduced that meets requirements R1-R5. First, a model for the time-varying cost function $f_k(u)$ is introduced. Second, the best function value estimates that follow from solving this model are computed. In Section \ref{sec:inputs}, this model is used to select an input for the next time step that balances exploration and exploitation, in the sense that it considers the expected value and uncertainties of the estimates.

\subsection{Uncertainty-based model} \label{subsec:model}

Consider the function $f_k(u)$ with noisy measurements according to \eqref{eq:yk}.  Since the function changes over time, older measurements give potentially outdated and uncertain information about the current function value at the same input. The relation between a current function value $f_k(u^{(i)})$, at the current time $k$, and older measurements $y_j$, taken at past time $j$ and associated to the same input, is modeled using an uncertainty that increases over time:
\begin{align} \label{eq:measurment_model}
	f_k(u^{(i)}) = y_j - \frac{\rho}{\sqrt{\omega_{j,k}}} \varepsilon_j, \quad \mbox{if $u_j = u^{(i)}$},
\end{align}
for all $k,j \in \mathbb{N}$ such that $k \geq j$. To derive the method, it is now assumed that $\varepsilon_j$ is a standard, Gaussian variable, and $\rho\in \mathbb{R}_{>0}$ is the same positive scaling constant as in \eqref{eq:yk}. The time-dependent weighting $\omega_{j,k} \in (0,1]$ is used to artificially inflate the noise of older measurements, so that, in effect,  older measurements contain less information about the current function value than newer measurements. This is done to take into account any time-variations of the function without explicitly modeling those. The weights should be close to 1 for small values of $k-j$ (recent measurements), and reduce to 0 for increasing values of $k-j$ (older measurements). To that end, we take 
\begin{equation} \label{eq:omega}
	\omega_{j,k} = \sum_{q = 0}^M \zeta_{j,k,q}
\end{equation}
with
\begin{equation} \label{eq:zeta}
	\zeta_{j,k,q} = \frac{1}{q!} \left(\ln\left(\frac{1}{\lambda}\right)(k - j)\right)^q \lambda^{k-j}
\end{equation}
for some nonnegative integer $M \in \mathbb{N}$ and some positive constant $\lambda \in (0,1)$. The forgetting factor $\lambda$ determines the rate at which the weight $\omega_{j,k}$ converges from one to zero for constant $k$ and decreasing $j$. Choosing $\lambda$ closer to zero leads to a faster increase in uncertainty, while $\lambda$ closer to one causes the uncertainty to increase more slowly. The term $M$ determines how long the weight remains close to one. In particular, for $M=0$ the uncertainty increases with a factor $\frac{1}{\lambda}$ for each time step, while for larger values of $M$ the increase in uncertainty varies over time, as illustrated in Fig. \ref{fig:weights}. In general, faster increases in uncertainty lead to more frequent perturbations in the input selection algorithm developed in Section \ref{sec:inputs}. 

\begin{figure}[t]
	\centering
	\setlength\figureheight{.4\figurewidth}
	\includegraphics{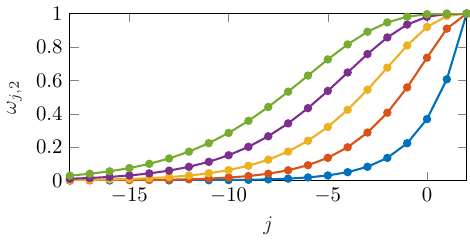}
	\caption{Example of the weights $\omega_{j,k}$ as a function of $j$ for $\lambda = e^{-0.5}$, $k=2$ and $M=0$ (\protect\blueline), $M=1$ (\protect\redline), $M=2$ (\protect\yelline), $M=3$ (\protect\purline) and $M=4$ (\protect\greenline). \label{fig:weights}}
\end{figure}

\subsection{Optimal estimates using past data: model solution}
To predict the function value $f_{k+1}^{(i)}$ for input value $u^{(i)}$ at the next time step $k+1$, consider the model:
\begin{align} \label{eq:model_prediction}
	\mathcal{\hat{M}}_{k+1}^{(i)} = \left\{ \tilde{f}_{k+1}^{(i)} = y_j - \frac{\rho}{\sqrt{\omega_{j,k+1}}} \varepsilon_j, \, \forall j \in \mathcal{J}_{k}(u^{(i)}) \right\},
\end{align}
which captures all available information about the function value at $u^{(i)}$, and predicts the future function value $f_{k+1}\left(u^{(i)}\right)$. The index set $\mathcal{J}_k^{(i)}$ contains all time indices up to time $k$ for which $u_j = u^{(i)}$, and is defined as
\begin{equation} \label{eq:J}
	\mathcal{J}_k^{(i)} = \left\{j \in \{0,1,\dots,k\} : \, u_j = u^{(i)} \right\},
\end{equation}
for all $k\in \mathbb{N}$ and $i \in \mathbb{Z}$. The best estimate of the function value $f_{k+1}^{(i)}$ given the equations in the model $\hat{\mathcal{M}}_{k+1}^{(i)}$ in \eqref{eq:model_prediction} follows from solving the model for the variable $\tilde{f}_{k+1}^{(i)}$. Assuming that the set $\mathcal{J}^{(i)}_{k}$ is nonempty, the solution of the model $\hat{\mathcal{M}}_{k+1}^{(i)}$ in \eqref{eq:model_prediction} for $\tilde{f}^{(i)}_{k+1}$ is given by
\begin{align} \label{eq:solution_prediction}
	\tilde{f}^{(i)}_{k+1|\hat{\mathcal{M}}_{k+1}^{(i)}} = \frac{\sum_{j \in \mathcal{J}_k^{(i)}} \omega_{j,k+1} y_j}{\sum_{j \in \mathcal{J}_k^{(i)}} \omega_{j,k+1} } - \rho \frac{\sum_{j \in \mathcal{J}_k^{(i)}} \sqrt{\omega_{j,k+1}} \varepsilon_j}{\sum_{j \in \mathcal{J}_k^{(i)}} \omega_{j,k+1} },
\end{align}
for all $i \in \{1,2,\dots,N\}$. The notation $\tilde{f}^{(i)}_{k+1|\hat{\mathcal{M}}_{k+1}^{(i)}}$ indicates that this is a solution based on the model $\hat{\mathcal{M}}_{k+1}^{(i)}$ for the next time step $k+1$. 

\begin{rem}
	If the set $\mathcal{J}^{(i)}_{k}$ is empty, then the solution for $\tilde{f}^{(i)}_{k+1}$ is unknown. 	
\end{rem} 

Let the mean and variance of $\tilde{f}^{(i)}_{k+1|\hat{\mathcal{M}}_{k+1}^{(i)}}$ in \eqref{eq:solution_prediction} be denoted by 
\begin{align}
	\hat{\mu}_{k+1}^{(i)} &= \mathbb{E}\left[\tilde{f}^{(i)}_{k+1|\hat{\mathcal{M}}_{k+1}^{(i)}}\right], \\
	\hat{\Sigma}_{k+1}^{(i)} &= \mathbb{E}\left[\left(\tilde{f}^{(i)}_{k+1|\hat{\mathcal{M}}_{k+1}^{(i)}} -\mathbb{E}\left[\tilde{f}^{(i)}_{k+1|\hat{\mathcal{M}}_{k+1}^{(i)}}\right] \right)^2\right].
\end{align}
It follows that
\begin{align} \label{eq:mean}
	\hat{\mu}_{k+1}^{(i)} &= \frac{\sum_{j \in \mathcal{J}_k^{(i)}} \omega_{j,k+1} y_j}{\sum_{j \in \mathcal{J}_{k}^{(i)}} \omega_{j,k+1} } = \frac{\sum_{q = 0}^M\sum_{j \in \mathcal{J}_{k}^{(i)}} \zeta_{j,k+1,q} y_j}{\sum_{q = 0}^M\sum_{j \in \mathcal{J}_{k}^{(i)}} \zeta_{j,k+1,q} } \\ \nonumber &= \frac{\sum_{q = 0}^M \xi_{k+1,q}^{(i)}}{\sum_{q = 0}^M \phi_{k+1,q}^{(i)} }
\end{align}
and
\begin{align} \label{eq:variance}
	\hat{\Sigma}_{k+1}^{(i)} &=  \frac{\rho^2}{\sum_{j \in \mathcal{J}_{k}^{(i)}} \omega_{j,k+1} } =  \frac{\rho^2}{\sum_{q = 0}^M\sum_{j \in \mathcal{J}_{k}^{(i)}} \zeta_{j,k+1,q} } \\ \nonumber &=  \frac{\rho^2}{\sum_{q = 0}^M \phi_{k+1,q}^{(i)}}
\end{align}
with
\begin{align} \label{eq:xi}
	\xi_{k+1,q}^{(i)} &= \sum_{j \in \mathcal{J}_{k}^{(i)}} \zeta_{j,k+1,q} y_j, \\ 
	\phi_{k+1,q}^{(i)} &= \sum_{j \in \mathcal{J}_{k}^{(i)}} \zeta_{j,k+1,q}. \label{eq:phi}
\end{align}
Note that $\hat{\mu}_{k+1}^{(i)}$ is the maximum likelihood estimate of $f_{k+1}(u^{(i)})$ given the model and the measurement data from time zero to time $k$, and $\hat{\Sigma}_{k+1}^{(i)}$ is the corresponding variance of the estimate. Both the function value estimate and its variance are used in Section \ref{sec:inputs} to select the next input value. Although the expressions in \eqref{eq:mean} and \eqref{eq:variance} are easy to compute for low values of $k$, computations are troublesome for high values of $k$ due to the potentially large number of terms in the sums in \eqref{eq:xi} and \eqref{eq:phi}. Therefore, the aim is to construct recursive expressions for $\xi_{k+1,q}^{(i)}$ in \eqref{eq:xi} and $\phi_{k+1,q}^{(i)}$ in \eqref{eq:phi} that are computationally efficient to update, to enable fast online reconstruction of $\hat{\mu}_{k+1}^{(i)}$ and $\hat{\Sigma}_{k+1}^{(i)}$.

Consider the vector notation
\begin{align}
	\boldsymbol\xi_{k}^{(i)} = \begin{bmatrix}
		\xi_{k,0}^{(i)} \\ \xi_{k,1}^{(i)} \\ \vdots \\ \xi_{k,M}^{(i)}
	\end{bmatrix}, \quad \boldsymbol\phi_k^{(i)} = \begin{bmatrix}
		\phi_{k,0}^{(i)} \\ \phi_{k,1}^{(i)} \\ \vdots \\ \phi_{k,M}^{(i)}
	\end{bmatrix}.
\end{align} 

Then, $\hat{\mu}_{k+1}^{(i)}$ and $\hat{\Sigma}_{k+1}^{(i)}$ can be rewritten as shown in the following lemma.
\begin{lem} \label{lem:mean_variance}
	The expressions for $\hat{\mu}_{k+1}^{(i)}$ and $\hat{\Sigma}_{k+1}^{(i)}$ in respectively \eqref{eq:mean} and \eqref{eq:variance} are equivalent to
	\begin{align} \label{eq:mu_update_2}
		\hat{\mu}_{k+1}^{(i)}&= \frac{\mathbf{c}\tran \mathbf{A} \boldsymbol\xi_{k}^{(i)}}{\mathbf{c}\tran \mathbf{A} \boldsymbol\phi_{k}^{(i)}}, \\ 
		\hat{\Sigma}_{k+1}^{(i)} &= \frac{\rho^2}{\mathbf{c}\tran \mathbf{A} \boldsymbol\phi_{k+1}^{(i)}}, \label{eq:sig_update_2}
	\end{align}
	and $\boldsymbol\xi_{k}^{(i)}$ and $\boldsymbol\phi_{k}^{(i)}$ follow from solving
	\begin{align} \label{eq:xi_update}
		\boldsymbol\xi_{k+1}^{(i)} &= \mathbf{A} \boldsymbol\xi_k^{(i)} + \mathbf{b} v^{\xi,(i)}_{k+1} \\ \label{eq:phi_update}
		\boldsymbol\phi_{k+1}^{(i)} &= \mathbf{A} \boldsymbol\phi_k^{(i)} + \mathbf{b} v^{\phi,(i)}_{k+1},
	\end{align}
	for the signals
	\begin{align}
		v^{\xi,(i)}_{k+1} &= \begin{cases}
			y_{k+1}, & \mbox{if $u_{k+1} = u^{(i)}$}\\
			0, & \mbox{otherwise}\\
		\end{cases}  \\
		v^{\phi,(i)}_{k+1} &= \begin{cases}
			1,  &\mbox{if $u_{k+1} = u^{(i)}$}\\
			0,  &\mbox{otherwise.}\\
		\end{cases} 
	\end{align}
	Both $\boldsymbol\xi$ and $\boldsymbol\phi$ are initialized at $\mathbf{0}$ at $k=-1$. The Toeplitz matrix $\mathbf{A}$ in \eqref{eq:mu_update_2}-\eqref{eq:phi_update} is given by
	\begin{align}
		&\mathbf{A} = \\ \nonumber &\begin{bmatrix}
			\lambda & 0 & 0 & \cdots & 0\\
			\lambda \ln\left( \frac{1}{\lambda}\right) & \lambda & 0 & \cdots & 0\\
			\frac{\lambda}{2} \left(\ln\left( \frac{1}{\lambda}\right)\right)^2 & \lambda \ln\left( \frac{1}{\lambda}\right) & \lambda & \ddots & \vdots\\
			\vdots & \ddots & \ddots & \ddots & 0 \\
			\frac{\lambda}{M!} \left(\ln\left( \frac{1}{\lambda}\right)\right)^M & \cdots & \frac{\lambda}{2} \left(\ln\left( \frac{1}{\lambda}\right)\right)^2 & \lambda \ln\left( \frac{1}{\lambda}\right) & \lambda
		\end{bmatrix},
	\end{align}
	and $\mathbf{b}$ and $\mathbf{c}$ are vectors given by
	\begin{align}
		\mathbf{b} &= \begin{bmatrix}
			1 & 0 & 0 & \dots & 0
		\end{bmatrix}\tran, \\
		\mathbf{c} &= \begin{bmatrix}
			1 & 1 & \dots & 1
		\end{bmatrix}. \tran
	\end{align}
\end{lem}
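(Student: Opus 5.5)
The plan is to show that the vector of weights $\zeta_{j,k,\cdot}$ of a single past measurement evolves by multiplication with $\mathbf{A}$, and that a new measurement enters through $\mathbf{b}$. The recursions \eqref{eq:xi_update}--\eqref{eq:phi_update} then follow by linearity, and \eqref{eq:mu_update_2}--\eqref{eq:sig_update_2} follow by summing the components with $\mathbf{c}$.

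\textbf{Step 1 (one-step propagation of $\zeta$).} Write $L=\ln(1/\lambda)$ and fix $j\le k$. The binomial theorem applied to $(k+1-j)^q=((k-j)+1)^q$ gives
\begin{align*}
\zeta_{j,k+1,q}=\frac{L^q (k-j+1)^q}{q!}\lambda^{k-j+1}
=\sum_{p=0}^{q}\frac{\lambda L^{q-p}}{(q-p)!}\,\frac{(L(k-j))^p}{p!}\lambda^{k-j}
=\sum_{p=0}^{q}\frac{\lambda L^{q-p}}{(q-p)!}\,\zeta_{j,k,p}.
\end{align*}
The coefficients $\lambda L^{q-p}/(q-p)!$ for $p\le q$ are exactly the entries $\mathbf{A}_{q,p}$ of the lower-triangular Toeplitz matrix. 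Hence $\boldsymbol\zeta_{j,k+1}=\mathbf{A}\boldsymbol\zeta_{j,k}$, where $\boldsymbol\zeta_{j,k}=[\zeta_{j,k,0},\dots,\zeta_{j,k,M}]\tran$. Since the sum over $q$ is truncated at $M$, the matrix is square and no terms are lost: component $q$ only needs components $p\le q$. For a fresh measurement ($j=k$), we have $\zeta_{k,k,q}=0^q\lambda^0$, which equals $1$ if $q=0$ and $0$ otherwise, using the convention $0^0=1$ that is implicit in \eqref{eq:zeta} (it gives $\omega_{k,k}=1$). Thus $\boldsymbol\zeta_{k,k}=\mathbf{b}$.

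\textbf{Step 2 (recursions by induction on $k$).} Take the recursively generated vectors as defining $\boldsymbol\xi_k^{(i)}=\sum_{j\in\mathcal{J}_k^{(i)}}\boldsymbol\zeta_{j,k}y_j$ and $\boldsymbol\phi_k^{(i)}=\sum_{j\in\mathcal{J}_k^{(i)}}\boldsymbol\zeta_{j,k}$. I will show by induction that \eqref{eq:xi_update}--\eqref{eq:phi_update}, started from $\mathbf{0}$ at $k=-1$, reproduce exactly these sums.
\begin{itemize}
\item The base case holds because $\mathcal{J}_{-1}^{(i)}=\emptyset$.
\item For the step, note that $\mathcal{J}_{k+1}^{(i)}=\mathcal{J}_k^{(i)}\cup\{k+1\}$ if $u_{k+1}=u^{(i)}$, and $\mathcal{J}_{k+1}^{(i)}=\mathcal{J}_k^{(i)}$ otherwise.
\item Step 1 maps the old terms to $\mathbf{A}\boldsymbol\xi_k^{(i)}$ (respectively $\mathbf{A}\boldsymbol\phi_k^{(i)}$).
\item The possible new term contributes $\mathbf{b}y_{k+1}=\mathbf{b}v^{\xi,(i)}_{k+1}$ (respectively $\mathbf{b}v^{\phi,(i)}_{k+1}$).
\end{itemize}

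\textbf{Step 3 (mean and variance).} The quantities in \eqref{eq:mean}--\eqref{eq:variance} use the index set $\mathcal{J}_k^{(i)}$ evaluated at time $k+1$. By Step 1, the numerator of \eqref{eq:mean} is
\begin{align*}
\sum_q\sum_{j\in\mathcal{J}_k^{(i)}}\zeta_{j,k+1,q}y_j=\mathbf{c}\tran\sum_{j\in\mathcal{J}_k^{(i)}}\mathbf{A}\boldsymbol\zeta_{j,k}y_j=\mathbf{c}\tran\mathbf{A}\boldsymbol\xi_k^{(i)}.
\end{align*}
The denominator similarly equals $\mathbf{c}\tran\mathbf{A}\boldsymbol\phi_k^{(i)}$, which proves \eqref{eq:mu_update_2}. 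The same computation gives the variance \eqref{eq:variance} as $\rho^2/(\mathbf{c}\tran\mathbf{A}\boldsymbol\phi_k^{(i)})$.

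\textbf{Expected obstacle: index bookkeeping.} Equations \eqref{eq:xi}--\eqref{eq:phi} label by $k+1$ a sum over $\mathcal{J}_k^{(i)}$, whereas the recursion labels by $k+1$ a sum over $\mathcal{J}_{k+1}^{(i)}$. I will state explicitly that $\boldsymbol\xi_k^{(i)},\boldsymbol\phi_k^{(i)}$ in the lemma refer to the recursive quantities. Under that reading, the subscript $k+1$ on $\boldsymbol\phi$ in \eqref{eq:sig_update_2} agrees with the derivation only when $u_{k+1}\neq u^{(i)}$. In general the derivation gives $\mathbf{c}\tran\mathbf{A}\boldsymbol\phi_k^{(i)}$, and I will note this reading, consistent with \eqref{eq:mu_update_2}. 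The other point needing care is the $0^0=1$ convention in Step 1.
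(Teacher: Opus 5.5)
Your proof is correct and follows the paper's own argument: the binomial expansion gives $\boldsymbol\zeta_{j,k+1}=\mathbf{A}\boldsymbol\zeta_{j,k}$, the fresh sample enters through $\mathbf{b}$, and linearity does the rest; your reading of $\boldsymbol\xi_k^{(i)},\boldsymbol\phi_k^{(i)}$ as sums over $\mathcal{J}_k^{(i)}$ with time-$k$ weights is the one the appendix silently adopts, and your $\hat{\Sigma}_{k+1}^{(i)}=\rho^2/(\mathbf{c}\tran\mathbf{A}\boldsymbol\phi_k^{(i)})$ is the form actually used in the proof of Lemma~\ref{lem:model_solution}. One small correction to your side remark: under that reading, when $u_{k+1}\neq u^{(i)}$ you get $\mathbf{c}\tran\mathbf{A}\boldsymbol\phi_{k+1}^{(i)}=\mathbf{c}\tran\mathbf{A}^2\boldsymbol\phi_k^{(i)}$, so the stated formula is off in that case too, and only $\mathbf{c}\tran\boldsymbol\phi_{k+1}^{(i)}$ (without $\mathbf{A}$) would agree there.
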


Proof: The proof is given in Appendix \ref{app:proof_mean_variance}. $\hfill\blacksquare$ 

Lemma \ref{lem:mean_variance} enables fast updating of the function value estimates and uncertainties at every time step, to comply with requirement R4. In addition, the amount of information that needs to be stored is small. Note that the values of $\boldsymbol\xi_{k}^{(i)}$ and $\boldsymbol\phi_{k}^{(i)}$ can be computed for any $k \in \mathbb{N}$ by setting $\boldsymbol\xi_{-1}^{(i)} = \boldsymbol\phi_{-1}^{(i)} = \mathbf{0}$ for all $i \in \mathbb{Z}$ and using the update equations in \eqref{eq:xi_update} and \eqref{eq:phi_update} to compute the values for $k$. Moreover, by doing so, the values for $\boldsymbol\xi_{k}^{(i)}$ and $\boldsymbol\phi_{k}^{(i)}$ are defined for all $k \in \mathbb{N}$ and all $i \in \mathbb{Z}$, even if the function value for a given input $u^{(i)}$ has not been measured (i.e., $\mathcal{J}_k^{(i)}$ is an empty set). In the next section, the model is used to determine a suitable input value for the next time step to support optimization and choose whether to perturb or not.

\section{Input selection} \label{sec:inputs}

In this section, an input for the next time step is selected in a way that balances exploration and exploitation to meet requirements R1 and R3, i.e., such that a potentially time-varying optimum is tracked without continuously perturbing the system unnecessarily. To this end, the relation between the function $f$ at the current input and at the two neighboring input values is modeled based on the uncertain function value estimates. In addition, the complete method is summarized, and alternative methods for input selection are discussed.

\subsection{Modeling the relation between three function values} \label{subsec:3pointmodel}

Next, the aim is to determine the input values $u_k$ for all $k \in \mathbb{N}$ such that maximization is achieved. The algorithm requires an initialization procedure where the first input $u_0$ and second input $u_1$ are chosen. Without any information about the location of the maximizer, we can choose any $u_0 \in \mathcal{U}$. Subsequently, the second input $u_1$ is chosen as a neighboring input value, i.e., $u_1 \in \{u_0+\Delta_u, u_0 - \Delta_u\} \subset \mathcal{U}$. After $u_0$ and $u_1$ have been chosen, any subsequent input value $u_{k+1}$ is selected using the uncertainty-based perturb and observe method for all $k \geq 1$. This input value is either the same as the previous input value or a neighboring point, i.e., $u_{k+1} \in \{u_k - \Delta_u, u_k, u_k + \Delta_u\}$ for all $k \geq 1$. 

To determine the next input value, the model of Section \ref{sec:approach} that predicts the function values at the next time step is used to construct a second, \textit{local}, model for the function. This local model describes the relation between the current input and its two neighbors. The model is piece-wise linear, and the individual pieces are based on the estimated mean values and uncertainties described in Section \ref{sec:approach}. The minimizer of this local model around the current input is used as the next input.

Consider the model \eqref{eq:model_prediction} with solution $\tilde{f}^{(i)}_{k+1|\hat{\mathcal{M}}_{k+1}^{(i)}}$ according to \eqref{eq:solution_prediction}. This solution can be written in the following form, in the sense that the probability distribution on both sides of the equation is the same:
\begin{align} \label{eq:solution_prediction_rewritten}
	\tilde{f}^{(i)}_{k+1|\hat{\mathcal{M}}_{k+1}^{(i)}}  = \hat{\mu}_{k+1}^{(i)} + \sqrt{\hat{\Sigma}^{(i)}_{k+1}} \varepsilon_{k+1}^{(i)},
\end{align}
where $\varepsilon_{k+1}^{(i)} \in \mathbb{R}$ is a standard, Gaussian variable.

A local model is built around the current input $u_k = u^{(\iota_k)}$ and its two neighboring points $u^{(\iota_k-1)}$ and $u^{(\iota_k+1)}$ for time step $k+1$, using the estimates of the function values $f_{k+1}(u)$ that follow from the model solution \eqref{eq:solution_prediction_rewritten}. The local model is piecewise-linear, and is based on the following a priori model:
\begin{align} \label{eq:pw_linear_model}
	\tilde{f}^{(\iota_k-1)}_{k+1} - 2\tilde{f}^{(\iota_k)}_{k+1} + \tilde{f}^{(\iota_k+1)}_{k+1} = \delta e^{(\iota_k)}_{k+1},
\end{align}
where $\delta \in \mathbb{R}_{>0}$ is a positive constant and $e^{(\iota_k)}_k$ is a standard, Gaussian variable. Note that the left-hand side of this model is zero if and only if the estimated function values $\tilde{f}^{(\iota_k-1)}_{k+1}$, $\tilde{f}^{(\iota_k)}_{k+1}$, and $\tilde{f}^{(\iota_k+1)}_{k+1}$ are on a straight line, since the distances between subsequent points are the same, i.e., $u^{(\iota_k+1)} - u^{(\iota_k)} = u^{(\iota_k)} - u^{(\iota_k-1)} = \Delta_u$. In addition, $\tilde{f}^{(\iota_k-1)}_{k+1}$, $\tilde{f}^{(\iota_k)}_{k+1}$, and $\tilde{f}^{(\iota_k+1)}_{k+1}$ in \eqref{eq:pw_linear_model} should correspond to the solution of the model $\mathcal{\hat{M}}_{k+1}^{(i)}$. This leads to the following local model for the function $f_{k+1}(u)$ around the point $u^{(\iota_k)}$, based on \eqref{eq:solution_prediction_rewritten} and \eqref{eq:pw_linear_model}:

\begin{align} \label{eq:model_S}
	\mathcal{S}_{k+1}^{(\iota_k)} = \left\{ \begin{aligned}
		& \tilde{f}^{(\iota_k-1)}_{k+1} = \hat{\mu}_{k+1}^{(\iota_k-1)} + \sqrt{\hat{\Sigma}^{(\iota_k-1)}_{k+1}} \varepsilon_{k+1}^{(\iota_k-1)}\\
		& \tilde{f}^{(\iota_k)}_{k+1} = \hat{\mu}_{k+1}^{(\iota_k)} + \sqrt{\hat{\Sigma}^{(\iota_k)}_{k+1}} \varepsilon_{k+1}^{(\iota_k)}\\
		& \tilde{f}^{(\iota_k+1)}_{k+1} = \hat{\mu}_{k+1}^{(\iota_k+1)} + \sqrt{\hat{\Sigma}^{(\iota_k+1)}_{k+1}} \varepsilon_{k+1}^{(\iota_k+1)}\\
		& \tilde{f}^{(\iota_k-1)}_{k+1} - 2\tilde{f}^{(\iota_k)}_{k+1} + \tilde{f}^{(\iota_k+1)}_{k+1} = \delta e^{(\iota_k)}_{k+1}
	\end{aligned} \right\}.
\end{align}
The intuitive interpretation of design variable $\delta$ is that it determines how much the piecewise linear model deviates from a linear model. For low values of $\delta$, the local model is forced to be closer to linear than for higher values of $\delta$, which cause the function to bend to stay closer to the estimates $\hat{\mu}_{k+1}^{(i-1)}$, $\hat{\mu}_{k+1}^{(i)}$ and $\hat{\mu}_{k+1}^{(i+1)}$. For the input selection in Section \ref{subsec:input}, lower values of $\delta$ imply more frequent perturbations. 

The mean of the solution of this model is used to determine the next input, and is given in the following lemma. 

\begin{lem} \label{lem:model_solution}
	The mean of the solution of the model $\mathcal{S}_{k+1}^{(\iota_k)}$ in \eqref{eq:model_S} for the variables $\tilde{f}^{(\iota_k-1)}_{k+1}$, $\tilde{f}^{(\iota_k)}_{k+1}$, and $\tilde{f}^{(\iota_k+1)}_{k+1}$ is given by
	\begin{align} \label{eq:h_solution}
		&\begin{bmatrix}
			h^{(\iota_k-1)}_{k+1} \\ h^{(\iota_k)}_{k+1} \\ h^{(\iota_k+1)}_{k+1} 
		\end{bmatrix} := \mathbb{E} \left(
		\begin{bmatrix}
			\tilde{f}^{(\iota_k-1)}_{k+1|\mathcal{S}_{k+1}^{(\iota_k)}} \\ \tilde{f}^{(\iota_k)}_{k+1|\mathcal{S}_{k+1}^{(\iota_k)}} \\ \tilde{f}^{(\iota_k+1)}_{k+1|\mathcal{S}_{k+1}^{(\iota_k)}}
		\end{bmatrix} \right) =	\begin{bmatrix}
		\frac{\theta_{k}^{(\iota_k-1)}}{\zeta_k} \\ \frac{\theta_{k}^{(\iota_k)}}{\zeta_k} \\ \frac{\theta_{k}^{(\iota_k+1)}}{\zeta_k}
		\end{bmatrix} 
	\end{align}
	with 
	\begin{align}
			&\theta_{k}^{(\iota_k-1)} = \mathbf{c}\tran \mathbf{A} \boldsymbol\xi_k^{(\iota_k-1)} \left(\nu^2 \mathbf{c}\tran \mathbf{A} \boldsymbol\phi_k^{(\iota_k)} \mathbf{c}\tran \mathbf{A} \boldsymbol\phi_k^{(\iota_k+1)} + \right. \\ \nonumber & \quad \left.  \mathbf{c}\tran \mathbf{A} \boldsymbol\phi_k^{(\iota_k)} + 4 \mathbf{c}\tran \mathbf{A} \boldsymbol\phi_k^{(\iota_k+1)}  \right) +   2 \mathbf{c}\tran \mathbf{A} \boldsymbol\xi_k^{(\iota_k)} \mathbf{c}\tran \mathbf{A} \boldsymbol\phi_k^{(\iota_k+1)}  \\ \nonumber
			& \quad  - \mathbf{c}\tran \mathbf{A} \boldsymbol\xi_k^{(\iota_k+1)} \mathbf{c}\tran \mathbf{A} \boldsymbol\phi_k^{(\iota_k)} ,\\ 
			&\theta_{k}^{(\iota_k)} = \mathbf{c}\tran \mathbf{A} \boldsymbol\xi_k^{(\iota_k)} \left( \nu^2 \mathbf{c}\tran \mathbf{A} \boldsymbol\phi_k^{(\iota_k-1)} \mathbf{c}\tran \mathbf{A} \boldsymbol\phi_k^{(\iota_k+1)} + \right. \\ \nonumber & \quad \left.    \mathbf{c}\tran \mathbf{A} \boldsymbol\phi_k^{(\iota_k-1)} + \mathbf{c}\tran \mathbf{A} \boldsymbol\phi_k^{(\iota_k+1)}  \right) + 2   \mathbf{c}\tran \mathbf{A} \boldsymbol\xi_k^{(\iota_k-1)} \mathbf{c}\tran \mathbf{A} \boldsymbol\phi_k^{(\iota_k+1)} \\ \nonumber
			&  \quad + \mathbf{c}\tran \mathbf{A} \boldsymbol\xi_k^{(\iota_k+1)} \mathbf{c}\tran \mathbf{A} \boldsymbol\phi_k^{(\iota_k-1)} ,\\
			&\theta_{k}^{(\iota_k+1)} = \mathbf{c}\tran \mathbf{A} \boldsymbol\xi_k^{(\iota_k+1)} \left(\nu^2 \mathbf{c}\tran \mathbf{A} \boldsymbol\phi_k^{(\iota_k-1)} \mathbf{c}\tran \mathbf{A} \boldsymbol\phi_k^{(\iota_k)} + \right. \\ \nonumber & \quad \left.  
			 4 \mathbf{c}\tran \mathbf{A} \boldsymbol\phi_k^{(\iota_k-1)} + \mathbf{c}\tran \mathbf{A} \boldsymbol\phi_k^{(\iota_k)} \right) -\mathbf{c}\tran \mathbf{A} \boldsymbol\xi_k^{(\iota_k-1)} \mathbf{c}\tran \mathbf{A} \boldsymbol\phi_k^{(\iota_k)} \\ \nonumber
			&  \quad  + 2 \mathbf{c}\tran \mathbf{A} \boldsymbol\xi_k^{(\iota_k)} \mathbf{c}\tran \mathbf{A} \boldsymbol\phi_k^{(\iota_k-1)} , \\
			&\zeta_k = \nu^2 \mathbf{c}^T \mathbf{A} \boldsymbol\phi_k^{(\iota_k-1)} \mathbf{c}^T \mathbf{A} \boldsymbol\phi_k^{(\iota_k)} \mathbf{c}^T \mathbf{A} \boldsymbol\phi_k^{(\iota_k+1)} \\ \nonumber & \quad +   \mathbf{c}^T \mathbf{A} \boldsymbol\phi_k^{(\iota_k)} \mathbf{c}^T \mathbf{A} \boldsymbol\phi_k^{(\iota_k+1)} + 4 \mathbf{c}^T \mathbf{A} \boldsymbol\phi_k^{(\iota_k-1)} \mathbf{c}^T \mathbf{A} \boldsymbol\phi_k^{(\iota_k+1)} \\ \nonumber & \quad + \mathbf{c}^T \mathbf{A} \boldsymbol\phi_k^{(\iota_k-1)} \mathbf{c}^T \mathbf{A} \boldsymbol\phi_k^{(\iota_k)},
	\end{align}
	and 
	\begin{align}
		\nu = \frac{\delta}{\rho}.
	\end{align}
\end{lem}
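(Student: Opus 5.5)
The plan is to treat $\mathcal{S}_{k+1}^{(\iota_k)}$ in \eqref{eq:model_S} as a linear Gaussian model in the unknown vector $\tilde{\mathbf f} = [\tilde f^{(\iota_k-1)}_{k+1},\, \tilde f^{(\iota_k)}_{k+1},\, \tilde f^{(\iota_k+1)}_{k+1}]\tran$. The solution of such a model is obtained by weighted least squares. Its mean, which is also the maximum-likelihood estimate, minimizes
\begin{equation*}
\sum_{i\in\{\iota_k-1,\iota_k,\iota_k+1\}} \frac{\bigl(\tilde f^{(i)}_{k+1}-\hat\mu^{(i)}_{k+1}\bigr)^2}{\hat\Sigma^{(i)}_{k+1}} + \frac{\bigl(\mathbf a\tran \tilde{\mathbf f}\bigr)^2}{\delta^2},
\end{equation*}
with $\mathbf a = [1,\,-2,\,1]\tran$. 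This is the same reasoning used for the solution \eqref{eq:solution_prediction} of $\hat{\mathcal M}^{(i)}_{k+1}$. Writing $\mathbf W = \mathrm{diag}(1/\hat\Sigma^{(\iota_k-1)}_{k+1},\,1/\hat\Sigma^{(\iota_k)}_{k+1},\,1/\hat\Sigma^{(\iota_k+1)}_{k+1})$ and $\hat{\boldsymbol\mu}$ for the stacked means, the normal equations are $(\mathbf W + \delta^{-2}\mathbf a\mathbf a\tran)\mathbf h = \mathbf W\hat{\boldsymbol\mu}$. Since $\mathbf W$ is positive definite whenever all three index sets $\mathcal J_k^{(i)}$ are nonempty, the system has a unique solution. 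I will state nonemptiness as a standing assumption; the formula remains meaningful as a limit otherwise.

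To solve the normal equations I will use the Sherman--Morrison formula:
\begin{equation*}
\mathbf h = \hat{\boldsymbol\mu} - \frac{\mathbf W^{-1}\mathbf a\,\mathbf a\tran\hat{\boldsymbol\mu}}{\delta^2 + \mathbf a\tran\mathbf W^{-1}\mathbf a}.
\end{equation*}
Next I substitute the quantities from Lemma \ref{lem:mean_variance}. Let $p_i = \mathbf c\tran\mathbf A\boldsymbol\phi_k^{(i)}$ and $x_i = \mathbf c\tran\mathbf A\boldsymbol\xi_k^{(i)}$. Then $\hat\mu^{(i)}_{k+1} = x_i/p_i$ by \eqref{eq:mu_update_2}, and $\hat\Sigma^{(i)}_{k+1} = \rho^2/p_i$.

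The second identity needs a small check. By \eqref{eq:variance}, the denominator of $\hat\Sigma^{(i)}_{k+1}$ is $\sum_{j\in\mathcal J_k^{(i)}}\omega_{j,k+1}$. The proof of Lemma \ref{lem:mean_variance} shows that this sum equals $\mathbf c\tran\mathbf A\boldsymbol\phi_k^{(i)}$, the same denominator as in \eqref{eq:mu_update_2}. The index $k+1$ in \eqref{eq:sig_update_2} must therefore be read consistently with this: the prediction uses $\mathcal J_k$, not $\mathcal J_{k+1}$.

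With these substitutions the denominator becomes $\delta^2+\rho^2(1/p_{-}+4/p_0+1/p_{+})$, where $p_-,p_0,p_+$ denote $p_{\iota_k-1},p_{\iota_k},p_{\iota_k+1}$. Multiplying it by $p_-p_0p_+/\rho^2$ and using $\nu=\delta/\rho$ gives
\begin{equation*}
\nu^2p_-p_0p_+ + p_0p_+ + 4p_-p_+ + p_-p_0,
\end{equation*}
which is exactly $\zeta_k$. For each component $i$ of $\mathbf h$, I write $\hat\mu^{(i)}$ over the common denominator, multiply numerator and denominator by $p_-p_0p_+/\rho^2$, and expand. The component-$i$ correction is $-(\rho^2 a_i/p_i)\,(x_-/p_- - 2x_0/p_0 + x_+/p_+)$.

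Take the first component as an example. The numerator is
\begin{equation*}
\frac{x_-}{p_-}\cdot\frac{\zeta_k}{1}\cdot\frac{1}{1} \;-\; \frac{1}{p_-}\Bigl(x_-p_0p_+ - 2x_0p_-p_+ + x_+p_-p_0\Bigr).
\end{equation*}
The $x_-p_0p_+$ terms cancel, and what remains is $x_-(\nu^2p_0p_+ + p_0 + 4p_+) + 2x_0p_+ - x_+p_0$. This is $\theta_k^{(\iota_k-1)}$. The other two components follow in the same way, with $a_0=-2$ producing the factor $4$ cancellation and the $+2$ cross terms.

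I expect the main obstacle to be bookkeeping rather than ideas. The algebra itself is routine, so I would verify one component carefully and invoke the left--right symmetry of $\mathbf a$ for the third. The real risks are two points I will make explicit: justifying that "the mean of the solution" of $\mathcal S$ is the weighted least-squares (Gaussian conditional mean) estimate, and reconciling the time index in $\hat\Sigma^{(i)}_{k+1}$.
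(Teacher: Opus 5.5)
Your route is the paper's own. The appendix writes the minimum-variance solution of $\mathcal{S}_{k+1}^{(\iota_k)}$ in exactly your rank-one (Sherman--Morrison) form and drops the zero-mean noise terms when taking the expectation. It then substitutes $\hat\mu^{(i)}_{k+1}=x_i/p_i$ and $\hat\Sigma^{(i)}_{k+1}=\rho^2/p_i$ with $p_i=\mathbf c\tran\mathbf A\boldsymbol\phi_k^{(i)}$, so it silently makes the same index reconciliation you make explicit. Your normal-equation derivation justifies what the paper only asserts, and your first component checks out.

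What you should not wave through is the middle component. If you carry out your own computation, the correction term adds $2p_-p_+\bigl(x_-/p_- - 2x_0/p_0 + x_+/p_+\bigr)$ to $x_0\zeta_k/p_0$, which gives
\begin{equation*}
\theta_k^{(\iota_k)} = x_0\left(\nu^2 p_-p_+ + p_- + p_+\right) + 2x_-p_+ + 2x_+p_- .
\end{equation*}
Your phrase ``the $+2$ cross terms'' already says this, but the lemma has coefficient $1$ on $x_+p_-$. The stated formula is a typo, for two reasons. First, it is not symmetric under $\iota_k-1\leftrightarrow\iota_k+1$. Second, take $x_i=m\,p_i$ for all $i$: the estimates are then equal, so $h^{(\iota_k)}_{k+1}$ must equal $m$, but the stated formula gives $m(\zeta_k-p_-p_+)/\zeta_k$, because the $4p_-p_+$ in $\zeta_k$ is only matched by $2+2$. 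Your argument therefore proves the corrected lemma, and you should say so rather than claim the stated $\theta_k^{(\iota_k)}$ ``follows in the same way.'' The other two numerators $\theta_k^{(\iota_k\pm1)}$, the denominator $\zeta_k$, and the ``otherwise'' case of \eqref{eq:h_three_points} are all consistent with the corrected version.

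A smaller point concerns your nonemptiness assumption and the loose limit argument. The lemma is also used when one neighbour is unmeasured, and the paper's proof, which divides by $p_i$, does not cover that case either. You can handle it directly. Multiply your normal equations by $\delta^2$ to get $\bigl(\nu^2\,\mathrm{diag}(p_-,p_0,p_+)+\mathbf a\mathbf a\tran\bigr)\mathbf h=\nu^2\mathbf x$, with $\mathbf x=[x_-,x_0,x_+]\tran$. This matrix has determinant $\nu^4\zeta_k>0$ whenever at most one $p_i$ vanishes. Cramer's rule then returns $\theta_k^{(i)}/\zeta_k$ with no limit needed.
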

Proof: The proof is given in Appendix \ref{app:proof_model_solution}. $\hfill\blacksquare$ 

The three (mean) points $h^{(\iota_k-1)}_{k+1}$, $h^{(\iota_k)}_{k+1}$, and $h^{(\iota_k+1)}_{k+1}$ form an estimate of the function $f_{k+1}(u)$ at the points $u^{(\iota_k-1)}$, $u^{(\iota_k)}$, and $u^{(\iota_k+1)}$, respectively. Note that for any given $k \geq 1$, at least two of the three possible input values $u_k - \Delta_u$, $u_k$, and $u_k + \Delta_u$ for the next input $u_{k+1}$ have been measured previously, which implies that at least two of the three values $\mathbf{c}^T \mathbf{A} \boldsymbol\phi_k^{(\iota_k-1)}$, $\mathbf{c}^T \mathbf{A} \boldsymbol\phi_k^{(\iota_k)}$, and $\mathbf{c}^T \mathbf{A} \boldsymbol\phi_k^{(\iota_k+1)}$ are positive (in fact, $\mathbf{c}^T \mathbf{A} \boldsymbol\phi_k^{(\iota_k)}$ is always positive because the function value at $u_k = u^{(\iota_k)}$ has been measured at time $k$). In turn, it follows that $\zeta_k$ has a positive value for all $k \geq 1$ and that the values of $h^{(\iota_k-1)}_{k+1}$, $h^{(\iota_k)}_{k+1}$ and $h^{(\iota_k+1)}_{k+1}$ in \eqref{eq:h_solution} are well defined.

Using the equations in \eqref{eq:mean} and \eqref{eq:variance}, we can write \eqref{eq:h_solution} as
{\begin{align} \label{eq:h_three_points}
	&\begin{bmatrix}
		h^{(\iota_k-1)}_{k+1} \\ h^{(\iota_k)}_{k+1} \\ h^{(\iota_k+1)}_{k+1}
	\end{bmatrix} = \\ \nonumber
	& \begin{cases} \! \!
		\begin{bmatrix}
			\hat{\mu}_{k+1}^{(\iota_k-1)} \\ \hat{\mu}_{k+1}^{(\iota_k)} \\ 2 \hat{\mu}_{k+1}^{(\iota_k)} - \hat{\mu}_{k+1}^{(\iota_k-1)}
		\end{bmatrix}, \quad \mbox{if $\mathcal{J}_k^{(\iota_k+1)} = \emptyset$,}\\
	\!\!	\begin{bmatrix}
			2 \hat{\mu}_{k+1}^{(\iota_k)} - \hat{\mu}_{k+1}^{(\iota_k+1)} \\ \hat{\mu}_{k+1}^{(\iota_k)} \\ \hat{\mu}_{k+1}^{(\iota_k+1)} 
		\end{bmatrix},  \quad \mbox{if $\mathcal{J}_k^{(\iota_k-1)} = \emptyset$,}\\
	\!\!	 \begin{bmatrix}
			\hat{\mu}_{k+1}^{(\iota_k-1)} \\ \hat{\mu}_{k+1}^{(\iota_k)} \\ \hat{\mu}_{k+1}^{(\iota_k+1)}
		\end{bmatrix} \!-\! \frac{\hat{\mu}_{k+1}^{(\iota_k-1)} - 2 \hat{\mu}_{k+1}^{(\iota_k)}  + \hat{\mu}_{k+1}^{(\iota_k+1)} }{1 + \frac{\hat{\Sigma}_{k+1}^{(\iota_k-1)}}{(\nu\rho)^2}  + 4 \frac{\hat{\Sigma}_{k+1}^{(\iota_k)}}{(\nu\rho)^2} + \frac{\hat{\Sigma}_{k+1}^{(\iota_k+1)}}{(\nu\rho)^2}} \! \begin{bmatrix}
			\frac{\hat{\Sigma}_{k+1}^{(\iota_k-1)}}{(\nu\rho)^2} \\  \frac{-2 \hat{\Sigma}_{k+1}^{(\iota_k)}}{(\nu\rho)^2} \\ \frac{\hat{\Sigma}_{k+1}^{(\iota_k+1)}}{(\nu\rho)^2}
		\end{bmatrix}\!\!, {\text{otherwise}}.
	\end{cases}	
\end{align}}
The first condition, $\mathcal{J}_k^{(\iota_k+1)} = \emptyset$, holds if the point $u^{(\iota_k+1)}$ has not been measured yet, the second condition, $\mathcal{J}_k^{(\iota_k-1)} = \emptyset$,  holds if the point $u^{(\iota_k-1)}$ has not been measured yet, and the third condition holds if all three points have been measured before. It is evident from \eqref{eq:h_three_points} that $h^{(\iota_k-1)}_{k+1} \approx \hat{\mu}_{k+1}^{(\iota_k-1)}$, $h^{(\iota_k)}_{k+1} \approx \hat{\mu}_{k+1}^{(\iota_k)}$, and $h^{(\iota_k+1)}_{k+1} \approx \hat{\mu}_{k+1}^{(\iota_k+1)}$ if all three points have been measured before and if the value of the tuning parameter $\nu$ is sufficiently large. Similarly, if all three points have been measured before and if the value $\nu$ is sufficiently small (i.e., it is close to zero), then the values of $h^{(\iota_k-1)}_{k+1}$, $h^{(\iota_k)}_{k+1}$, and $h^{(\iota_k+1)}_{k+1}$ are almost on a straight line because $h^{(\iota_k-1)}_{k+1} -2 h^{(\iota_k)}_{k+1} + h^{(\iota_k+1)}_{k+1} \approx 0$ in that case. If only two points have been measured (i.e., the first two conditions in \eqref{eq:h_three_points}), we always have that $h^{(\iota_k-1)}_{k+1}$, $h^{(\iota_k)}_{k+1}$, and $h^{(\iota_k+1)}_{k+1}$ are on a straight line since $h^{(\iota_k-1)}_{k+1} -2 h^{(\iota_k)}_{k+1} + h^{(\iota_k+1)}_{k+1} = 0$.

\subsection{Optimal input selection and implementation} \label{subsec:input}

Having defined $h^{(\iota_k-1)}_{k+1}$, $h^{(\iota_k)}_{k+1}$, and $h^{(\iota_k+1)}_{k+1}$, we propose the following uncertainty-based perturb and observe algorithm for the selection of the next input:
\begin{equation} \label{eq:optimization1}
	u_{k+1} = u^{(\iota_{k+1})},
\end{equation}
with
\begin{align} \label{eq:optimization2}
	&\iota_{k+1} \in \\ \nonumber & \begin{cases}
		 \iota_k-1, \: \mbox{if $p_k^{(\iota_k-1)} < p_k^{(\iota_k+1)}$ and $0 \leq h_{k+1}^{(\iota_k)} - h_{k+1}^{(\iota_k+1)} \leq \tau$,}\\
		 \iota_k+1,  \: \mbox{if $p_k^{(\iota_k-1)} > p_k^{(\iota_k+1)}$ and $0 \leq h_{k+1}^{(\iota_k)} - h_{k+1}^{(\iota_k-1)} \leq \tau$,}\\
		 \argmax_{i\in\{\iota_k-1,\iota_k,\iota_k+1\}} h_{k+1}^{(i)},  \quad \mbox{otherwise,}
	\end{cases} 
\end{align}
for some small constant $\tau \in \mathbb{R}_{>0}$, where $p_k^{(i)}$ is defined as
\begin{equation} \label{eq:p_k}
	p_k^{(i)} = \max\left\{ \mathcal{J}_k^{(i)}, -1  \right\}
\end{equation}
for all $i\in\{\iota_k-1,\iota_k,\iota_k+1\}$. Hence, $p_k^{(i)}$ is the last time that input value $u^{(i)}$ was measured, or minus one if that point has not been measured. Note that $p_k^{(\iota_k)} = k$, and that $ -1\leq p_k^{(\iota_k-1)} < p_k^{(\iota_k)}$ and $ -1\leq p_k^{(\iota_k+1)} < p_k^{(\iota_k)}$ for all $k \in \mathbb{N}$.

The first two cases in \eqref{eq:optimization2} correspond to the situation where the point $h_{k+1}^{(\iota_k)}$ of the current input is only slightly higher than that of the other input that was most recently measured, i.e., $h_{k+1}^{(\iota_k-1)}$ or $h_{k+1}^{(\iota_k+1)}$, depending on whether $p_k^{(\iota_k-1)}$ or $p_k^{(\iota_k+1)}$ is larger. In this situation, the algorithm is forced to perturb to the least recently measured point. In all other cases, the next input corresponds to the highest of the three points $h_{k+1}^{(\iota_k-1)}$, $h_{k+1}^{(\iota_k)}$ and $h_{k+1}^{(\iota_k+1)}$ in the local model. The resulting uncertainty-based perturb and observe approach is summarized in Algorithm \ref{alg:upo} for $n_{\text{iteration}}$ iterations, and illustrated in Fig. \ref{fig:ex_dec}.

\begin{figure}
	\centering
	\setlength\figurewidth{.9\defcolwidth}
	\setlength\figureheight{.7\figurewidth}
	\includegraphics{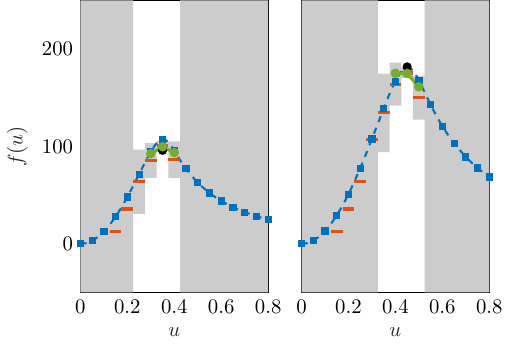} 
	\caption{Illustration of the decision process of uncertainty-based perturb and observe at $k=50$ (left) and $k=95$ (right) for a time-varying function (\protect\bluedash) (see Section \ref{sec:sims} for simulation details, cf. the uP\&O result in Fig. \ref{fig:input}). The next input is chosen based on the highest point in the local model (\protect\grefulldot\protect\greenline) around the current input (\protect\blackfulldot), which is based on the function value estimates (\protect\redline) and their uncertainties (indicated in gray). At $k=50$ (left), the current input is the highest point, and the input remains constant. At $k=95$ (right), the point on the left is the highest in the local model, causing a perturbation.  \label{fig:ex_dec}}
\end{figure}

\begin{algorithm}[H]
	\caption{Uncertainty-based perturb and observe} 	\label{alg:upo}
	\begin{algorithmic}[1]
		\State{Initialize $u_0$ and $u_1 \in \{u_0 -\Delta_u,u_0 + \Delta_u\}$, and set $n_{\text{iteration}}$, $\rho$, $\nu$, $\tau$, $\lambda$ and $M$}
		\State{Measure $y_k = f_k(u_k)+ \rho \varepsilon_k$ for $k=0$}
		\State{\textbf{for} $k=1:n_{\text{iteration}}$}
		\State{\quad Measure $y_k = f_k(u_k)+ \rho \varepsilon_k$ and compute $	\boldsymbol\xi_{k+1}^{(i)}$ and}
		\Statex{\quad $\boldsymbol\phi_{k+1}^{(i)}$, using respectively \eqref{eq:xi_update} and \eqref{eq:phi_update} in Lemma \ref{lem:mean_variance}.}
		\State{\quad Compute $h^{(\iota_k-1)}_{k+1}$, $h^{(\iota_k)}_{k+1}$ and $h^{(\iota_k+1)}_{k+1}$ according to}
		\Statex{\quad Lemma \ref{lem:model_solution}.}
		\State{\quad Compute $p^{(\iota_k-1)}_k$, $p^{(\iota_k)}_k$ and $p^{(\iota_k+1)}_k$ using \eqref{eq:p_k}.}
		\State{\quad Determine which of the three cases in \eqref{eq:optimization2} applies}
		\Statex{\quad and take $u_{k+1} = u^{(\iota_{k+1})}$ accordingly, see \eqref{eq:optimization1}.}	
		\State{\textbf{end} }	
	\end{algorithmic}
\end{algorithm}

\subsection{Comparison to other input selection methods} \label{subsec:other_inputs}
\label{subsec:sampling} 

There exist other methods to select the next input based on the model developed in Section \ref{sec:approach}. The preliminary results in \cite{Aarnoudse2025a} consider a special case of the weighting used in the current paper with $M = 0$. Then, the optimal input over a finite horizon of for example five time steps is computed. While this approach performs well in simulations, only convergence conditions for standard P\&O are provided, and this method does not enable tuning for convergence without resorting to standard P\&O.

Other input selection methods that could be considered include Highest Expected Improvement \cite{Jones1998} and Thompson sampling \cite{Thompson1933,Chapelle2011}, both of which are often used in the context of Gaussian processes and bandit problems. When using Highest Expected Improvement for optimization, the index of the next input is chosen as 
\begin{align}
	&\iota_{k+1} \in \arg \max_{j \in \{\iota_{k}-1,\iota_{k},\iota_{k}+1\}} \left( \hat{\mu}^{(j)}_{k+1} - \hat{\mu}^{(\iota_{k})}_{k+1} - \alpha \right) \times \\ \nonumber & \Phi \left( \frac{\hat{\mu}^{(j)}_{k+1} - \hat{\mu}^{(\iota_{k})}_{k+1} - \alpha}{\sqrt{\hat{\Sigma}^{(j)}_{k+1}}} \right) + \sqrt{\hat{\Sigma}^{(j)}_{k+1}} \varphi \left( \frac{\hat{\mu}^{(j)}_{k+1} - \hat{\mu}^{(\iota_{k})}_{k+1} - \alpha}{\sqrt{\hat{\Sigma}^{(j)}_{k+1}}} \right),
\end{align} 
where the hyperparameter $\alpha$ determines the trade-off between exploration and exploitation and $\Phi$ and $\varphi$ denote respectively the cumulative density function and the probability density function of a standard Gaussian variable. In Thompson sampling, one sample is taken of each of the three Gaussian distributions with mean $\hat{\mu}^{(j)}_{k+1}$ and variance $\hat{\Sigma}^{(j)}_{k+1}$ for $j \in \{\iota_{k}-1,\iota_{k},\iota_{k}+1\}$, and the index that gives the best sample is used as the next input. Thus, the sampling is random with probabilities proportional to the expected values and uncertainties of the considered inputs. Highest Expected Improvement and Thompson sampling may perform well in simulations, but due to the stochastic components it is difficult to prove that the input converges to the optimal input and tracks it over time in a deterministic sense. Therefore, these methods are not preferred for the perturb-and-observe setting considered here. Still, in Section \ref{sec:sims}, these approaches are compared in simulation to the approach presented in this paper.

\section{Convergence of uncertainty-based P\&O} \label{sec:convergence}

In this section, convergence conditions for uncertainty-based perturb and observe are presented to demonstrate that the algorithm meets requirement R5. To build towards this result, first some preliminary assumptions and lemmas are introduced. Second, the convergence of uncertainty-based perturb and observe for time-varying functions is analyzed. 

\subsection{Assumptions and lemmas to prove convergence}	

First, a different assumption on the measurement uncertainty is required.

\begin{ass} \label{assum:noise}
	The noise variable $\varepsilon_k$ in \eqref{eq:yk} is bounded and satisfies $|\varepsilon_k|\leq 1 \: \forall k \in \mathbb{N}$. 
\end{ass}

Assumption \ref{assum:noise} is more restrictive than the description of $\varepsilon_k$ as an i.i.d., white, standard Gaussian variable used to derive the approach in Section \ref{sec:approach}. The assumption is necessary to accommodate a deterministic convergence analysis, and is not limiting in practice, since measurement disturbances are often bounded with, e.g., truncated Gaussian distributions. In addition, the convergence analysis considers the deterministic outcome, i.e., a choice of input, of the model in Section \ref{sec:approach}, regardless of the underlying statistical assumptions or possible inaccuracies in the model. Finally, the effective magnitude of the noise is determined by $\rho$, even when $\varepsilon_k$ is bounded.

\noindent In addition, the following assumptions are adopted.

\begin{ass} \label{assum:curvature}
	The increment of $f_k(u)$ between two neighboring inputs at the same time instance is bounded as follows:
	\begin{equation}
		L_b\left|u^{(i+\frac{1}{2})} - u_k^*\right|^2  \leq \left(f_k(u^{(i)}) - f_k(u^{(i+1)}) \right) \left(u^{(i+\frac{1}{2})} - u_k^*\right),
	\end{equation}
	for all $k \in \mathbb{N}$, all $i \in \mathbb{Z}$, and some positive constant $L_b \in \mathbb{R}_{>0}$, where $u^{(i+\frac{1}{2})}$ is defined as
	\begin{equation}
		u^{(i+\frac{1}{2})} = \frac{u^{(i)} + u^{(i+1)}}{2} = \left(i + \frac{1}{2}\right) \Delta_u.
	\end{equation}
\end{ass}
Note that Assumption \ref{assum:curvature} implies that the function becomes increasingly steeper further away from the maximizer.
\begin{ass} \label{assum:time_change}
	The rate of change of $f_k(u)$ between two subsequent time steps, for constant $u$, is bounded, i.e., there exists a positive constant $L_k \in \mathbb{R}_{>0}$ such that
	\begin{equation}
		|f_{k+1}(u) - f_k(u)| \leq L_k
	\end{equation}
	for all $k \in \mathbb{N}$ and all $u \in \mathcal{U}$.
\end{ass}

Assumption \ref{assum:curvature} is a common assumption in the analysis of extremum-seeking control algorithms, and enables the analysis of the convergence of the algorithm to a bounded region around the optimum. This is relevant when the cost landscape around the optimum is almost flat, which in combination with noisy measurements can make guaranteed convergence to the exact optimum impossible. Assumption \ref{assum:time_change} ensures that the rate of change of the time-varying function is not so fast that the algorithm cannot track it. 

The convergence proof of the presented uncertainty-based perturb and observe algorithm is presented in three parts. First, we show in Lemma~\ref{lem:maximizer} that the change of the optimum $u_k^*$ of the function $f_k$ in a finite amount of time is bounded under the given assumptions. Second, we show in Lemma~\ref{lem:convergence_bounds} that the input $u_k$ will converge to a region of the optimum if the used algorithm satisfies certain conditions. Third, we show in Theorem \ref{thm:convergence} that the presented algorithms satisfy these conditions if tuned correctly, which implies that the convergence bounds of Lemma~\ref{lem:convergence_bounds} hold.

\begin{lem} \label{lem:maximizer}
	Under Assumptions~\ref{assum:maximum}-\ref{assum:time_change}, there exists a positive constant $L^* \in \mathbb{R}_{>0}$, such that
	\begin{equation} \label{eq:lem_maximizer}
		|u_{k+N}^* - u_k^*| \leq  L^* \sqrt{N} 
	\end{equation}
	for all $N \in \mathbb{N}$.
\end{lem}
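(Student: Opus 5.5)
The plan is to turn Assumption~\ref{assum:curvature} into a quadratic growth bound of $f_k$ away from its maximizer. Then I compare the two maximizers $u_k^*$ and $u_{k+N}^*$ using the time-variation bound of Assumption~\ref{assum:time_change}, which over $N$ steps gives a difference that is at most linear in $N$. Quadratic growth against linear change yields the $\sqrt{N}$ rate.

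\emph{Step 1 (quadratic growth).} Fix $k$ and let $u = u^{(\iota_k^*+m)}$ with $m \geq 1$. For each $i \in \{\iota_k^*, \dots, \iota_k^*+m-1\}$ we have $u^{(i+\frac{1}{2})} - u_k^* = (i-\iota_k^*+\tfrac12)\Delta_u > 0$. Dividing Assumption~\ref{assum:curvature} by this positive quantity gives
\begin{equation*}
f_k(u^{(i)}) - f_k(u^{(i+1)}) \geq L_b (i-\iota_k^*+\tfrac12)\Delta_u .
\end{equation*}
Summing telescopically over $i$ yields
\begin{equation*}
f_k(u_k^*) - f_k(u) \geq L_b \Delta_u \sum_{l=0}^{m-1}\left(l+\tfrac12\right) = \tfrac{L_b \Delta_u m^2}{2} = \tfrac{L_b}{2\Delta_u}|u-u_k^*|^2 .
\end{equation*}
The case $m \leq -1$ is symmetric: there $u^{(i+\frac12)}-u_k^*<0$, so the inequality reverses after division, and the sum runs leftward. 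Hence $f_k(u_k^*) - f_k(u) \geq c\,|u-u_k^*|^2$ for all $u\in\mathcal{U}$ and all $k$, with $c = L_b/(2\Delta_u)$. Assumption~\ref{assum:maximum} guarantees that $u_k^*$ is well defined on the grid.

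\emph{Step 2 (comparison).} Write $a = u_k^*$ and $b = u_{k+N}^*$. Applying Step~1 at time $k$ with $u=b$, and at time $k+N$ with $u=a$, gives
\begin{equation*}
f_k(a)-f_k(b) \geq c|a-b|^2, \qquad f_{k+N}(b)-f_{k+N}(a) \geq c|a-b|^2 .
\end{equation*}
Adding the two inequalities and regrouping,
\begin{equation*}
2c|a-b|^2 \leq \big(f_k(a)-f_{k+N}(a)\big) + \big(f_{k+N}(b)-f_k(b)\big).
\end{equation*}
By Assumption~\ref{assum:time_change} and the triangle inequality over $N$ steps, each bracket is at most $N L_k$. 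Therefore $|a-b|^2 \leq N L_k / c = 2\Delta_u L_k N / L_b$, so \eqref{eq:lem_maximizer} holds with $L^* = \sqrt{2\Delta_u L_k / L_b}$. The case $N=0$ is trivial.

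The only delicate point is Step~1: the direction of the sum and the sign of $u^{(i+\frac12)}-u_k^*$ must be handled correctly on both sides of the maximizer. The rest is routine. Assumption~\ref{assum:noise} is not needed for this lemma.
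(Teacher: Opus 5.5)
Your proof is correct and follows essentially the same route as the paper: the telescoping argument from Assumption~\ref{assum:curvature} that gives $f_k(u_k^*)-f_k(u)\geq \frac{L_b}{2\Delta_u}|u-u_k^*|^2$ is identical, and the $\sqrt{N}$ rate comes from the same comparison of quadratic growth against linear drift. The only difference is in the comparison step: the paper uses quadratic growth at time $k$ alone together with optimality of $u_{k+N}^*$, paying $2NL_k$ and obtaining $L^*=2\sqrt{L_k\Delta_u/L_b}$, whereas your symmetric use of the growth bound at both times $k$ and $k+N$ gives the slightly sharper constant $L^*=\sqrt{2L_k\Delta_u/L_b}$.
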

\begin{proof}
	See Appendix~\ref{app:proof_lem_maximizer}.
\end{proof}

\begin{lem} \label{lem:convergence_bounds}
	Under Assumptions~\ref{assum:maximum}-\ref{assum:time_change}, if $|u_{k+1} - u_k| \leq \Delta_u$ for all $k \in \mathbb{N}$, and if there exists a positive constant $b \in \mathbb{R}_{>0}$, such that, whenever $|u_k - u_{k+1}^*| \geq b$, we have that $|u_{k+1} - u_{k+1}^*| = |u_k - u_{k+1}^*| - \Delta_u$ (i.e., $u_{k+1}$ is one step closer to $u_{k+1}^*$ than $u_k$) for any $k \geq k_0$, where $k_0 \in \mathbb{N}$ is a nonnegative integer, then there exist positive constants $c_1, c_2 \in \mathbb{R}_{>0}$, such that
	\begin{equation} \label{eq:lem_sup1}
		\sup_{k \geq 0} |u_k - u_k^*| \leq |u_0-u_0^*| + c_1
	\end{equation}
	and
	\begin{equation} \label{eq:lem_lim_sup1}
		\limsup_{k\rightarrow\infty} |u_k - u_k^*| \leq c_2
	\end{equation}
	for all $u_0 \in \mathcal{U}$.
\end{lem}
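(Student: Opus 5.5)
Set $d_k = |u_k - u_k^*|$. The plan is to combine the contraction property in the hypothesis with the drift bound of Lemma~\ref{lem:maximizer}, which holds under Assumptions~\ref{assum:maximum}--\ref{assum:time_change}. First we extract one-step consequences. Applying Lemma~\ref{lem:maximizer} with $N=1$ gives $|u_{k+1}^* - u_k^*| \leq L^*$. Combined with $|u_{k+1}-u_k| \leq \Delta_u$ this yields the crude bound $d_{k+1} \leq d_k + \Delta_u + L^*$ for every $k$. For $k \geq k_0$ we split into two cases. If $|u_k - u_{k+1}^*| \geq b$, then $d_{k+1} = |u_k - u_{k+1}^*| - \Delta_u$. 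If $|u_k - u_{k+1}^*| < b$, then $d_{k+1} < b + \Delta_u$.

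The key difficulty is that the decrease of $\Delta_u$ per step must beat the drift of $u^*$. The one-step drift $L^*$ may exceed $\Delta_u$, so a one-step argument fails. This is where the $\sqrt{N}$ growth in Lemma~\ref{lem:maximizer} is used. We consider a window of $N$ steps starting at some $k \geq k_0$, during which $|u_j - u_{j+1}^*| \geq b$ for all $j$ in the window. On that window, $u_{j+1}$ moves exactly $\Delta_u$ toward $u_{j+1}^*$ at every step. We also use that the optimum cannot jump past the iterate. Since $u^*$ moves at most $L^*\sqrt{N}$ in $N$ steps while $u_j$ moves $N\Delta_u$, we compare against the moving target. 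Writing signed distances, $u_{j+1}-u_{j+1}^* = (u_j - u_{j+1}^*) - \mathrm{sgn}(u_j-u_{j+1}^*)\Delta_u$. Because $|u_j-u_{j+1}^*| \geq b > \Delta_u$ (we may take $b$ large, e.g.\ $b\geq \Delta_u + L^*$, without loss of generality), the sign stays constant over the window. Summing then gives
\begin{equation*}
d_{k+N} \leq d_k - N\Delta_u + |u_{k+N}^*-u_k^*| \leq d_k - N\Delta_u + L^*\sqrt{N}.
\end{equation*}
The sign consistency is the step I expect to be most delicate. I would handle it by checking that a sign flip would require $|u_j - u_{j+1}^*| < b$, which exits the window.

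From this we get the $\limsup$ bound. Choose $N_0$ with $N_0\Delta_u - L^*\sqrt{N_0} \geq \Delta_u$; then over any window of length $N_0$ inside the far region, $d$ decreases by at least $\Delta_u$. So after $k_0$ the iterate enters the region $\{|u_k-u_{k+1}^*|<b\}$ in finite time. Once inside, $d_{k+1} < b+\Delta_u$, so $d_k \leq b+\Delta_u+L^*$. After leaving that region, $d$ can grow for at most the number of steps $N$ with $L^*\sqrt N \geq N\Delta_u$, i.e.\ $N \leq (L^*/\Delta_u)^2$. Over such an excursion the growth is bounded by $\sup_N (L^*\sqrt N - N\Delta_u) + $ const, which is at most $(L^*)^2/(4\Delta_u)$ plus a constant. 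This gives $c_2$, which depends only on $b,\Delta_u,L^*$.

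Finally, the $\sup$ bound. For $k<k_0$, the crude bound gives $d_k \leq d_0 + k_0(\Delta_u+L^*)$. For $k \geq k_0$, the excursion argument above shows $d_k \leq \max\{d_{k_0}, b+\Delta_u+L^*\} + (L^*)^2/(4\Delta_u) + \text{const}$. Together these yield $c_1$ independent of $u_0$.
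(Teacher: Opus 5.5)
Your argument is correct, but it is organized differently from the paper's.

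\textbf{The paper's route.} The paper fixes one window length $N=\lceil L^*/\Delta_u+1\rceil^2$ and a threshold $\gamma=L^*\sqrt N+(N-1)\Delta_u+b$. This threshold is chosen so that $|u_k-u_k^*|\ge\gamma$ \emph{by itself} forces all $N$ following steps into the contracting regime: the iterate moves at most $(N-1)\Delta_u$ and the optimum at most $L^*\sqrt N$. This gives a decrease by $\Delta_u$ over $N$ steps outside the $\gamma$-ball, and then the recursion $|u_{k+N}-u_{k+N}^*|\le\max\{|u_k-u_k^*|-\Delta_u,\ \gamma+L^*\sqrt N+N\Delta_u\}$ for all $k\ge k_0$. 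The paper therefore never has to track when the trajectory enters or leaves the set $|u_k-u_{k+1}^*|<b$.

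\textbf{Your route.} You let the trajectory define the windows, as maximal stretches on which $|u_j-u_{j+1}^*|\ge b$. You prove $d_{k+m}\le d_k-m\Delta_u+L^*\sqrt m$ for every $m$ inside a stretch, deduce that stretches are finite, and bound each excursion by $\sup_{m\ge 0}(L^*\sqrt m-m\Delta_u)=(L^*)^2/(4\Delta_u)$.

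\textbf{What each buys.} Your version gives explicit constants with a better dependence on $L^*$, e.g.\ $c_2=\max\{b+L^*,\ b+\Delta_u+(L^*)^2/(4\Delta_u)\}$. This uses $b$ enlarged as you propose, which is legitimate because the hypothesis is monotone in $b$. The paper's constant is $2L^*\sqrt N+(2N-1)\Delta_u+b\ge 4(L^*)^2/\Delta_u+b$. Your version also controls every time index directly, rather than each residue class modulo $N$. The price is the hitting-time bookkeeping and the sign-consistency argument, which the paper's choice of $\gamma$ makes automatic.

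\textbf{Two steps to make precise in the write-up.}

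First, the summation does not need $d_k\ge N\Delta_u$. Let $s$ be the common sign of $u_j-u_{j+1}^*$ on the window. The endpoint satisfies $s(u_{k+N}-u_{k+N}^*)\ge b-\Delta_u>0$, so $d_{k+N}=s(u_k-u_k^*)-N\Delta_u+s(u_k^*-u_{k+N}^*)\le d_k-N\Delta_u+L^*\sqrt N$.

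Second, replace the remark that $d$ ``grows for at most $(L^*/\Delta_u)^2$ steps'' with the exact statement. If $k_1\ge k_0$ is a near time and $k_1+1,\dots,k_1+n$ are far, then $d_{k_1+1+m}\le d_{k_1+1}+(L^*)^2/(4\Delta_u)<b+\Delta_u+(L^*)^2/(4\Delta_u)$ for all $0\le m\le n$.
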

\begin{proof}
	See Appendix~\ref{app:proof_lem_convergence_bounds}.
\end{proof}

\subsection{Convergence of uP\&O}

For the convergence of the presented uncertainty-based perturb and observe algorithm, we obtain the following result.

\begin{thm} \label{thm:convergence}
	Let $\nu^* \in \mathbb{R}_{>0}$ be any positive constant. Under Assumptions~\ref{assum:noise}-\ref{assum:time_change}, using the uncertainty-based perturb and observe optimizer in \eqref{eq:optimization1}-\eqref{eq:optimization2} with $M = 0$ or $M = 1$, there exist constants $c_1, c_2 \in \mathbb{R}_{>0}$, and a sufficiently small constant $\lambda^*\in (0,1)$, such that
	\begin{equation} \label{eq:lem_sup}
		\sup_{k \geq 0} |u_k - u_k^*| \leq |u_0-u_0^*| + c_1
	\end{equation}
	and
	\begin{equation} \label{eq:lem_lim_sup}
		\limsup_{k\rightarrow\infty} |u_k - u_k^*| \leq c_2
	\end{equation}
	for all $u_0 \in \mathcal{U}$, all $u_1 \in \{u_0 - \Delta_u,u_0 + \Delta_u \}$,  all $\nu \in (0,\nu^*]$, and all $\lambda \in (0,\lambda^*]$.
\end{thm}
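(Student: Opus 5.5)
The plan is to reduce the theorem to Lemma~\ref{lem:convergence_bounds}. The algorithm already guarantees $|u_{k+1}-u_k|\le\Delta_u$, because $\iota_{k+1}\in\{\iota_k-1,\iota_k,\iota_k+1\}$. So it remains to exhibit a constant $b$ such that, whenever $|u_k-u_{k+1}^*|\ge b$, the rule \eqref{eq:optimization2} moves the input one step toward $u_{k+1}^*$. The constant $b$ may depend on $\rho$, $\tau$, $\nu^*$, $L_b$, $L_k$, but it must be uniform in $\nu\in(0,\nu^*]$ and $\lambda\in(0,\lambda^*]$.

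The central idea is that for small $\lambda$ the estimator effectively forgets everything except the most recent measurement at each input. For $M=0$ we have $\omega_{j,k+1}=\lambda^{k+1-j}$, and for $M=1$ we have $\omega_{j,k+1}=(1+\ln(1/\lambda)(k+1-j))\lambda^{k+1-j}$. In both cases the ratio of the weight of a sample one step older to a newer one is $O(\lambda\ln(1/\lambda))\to0$. This is why only $M\le1$ is treated: for larger $M$ the polynomial factor still vanishes, but the bookkeeping grows.

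The proof then proceeds in the following steps.

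\textbf{Step 1 (estimates).} Using \eqref{eq:mean}, write $\hat\mu^{(i)}_{k+1}$ as a convex combination of past $y_j$ dominated by $y_{p_k^{(i)}}$. Combining Assumptions~\ref{assum:noise} and~\ref{assum:time_change} with the geometric decay of the weights, bound
\[
|\hat\mu^{(i)}_{k+1}-f_{k+1}(u^{(i)})|\le \rho+L_k\,(k+1-p_k^{(i)})+\epsilon(\lambda)\cdot(\text{growth terms}).
\]
The growth terms are controlled because older samples are penalized geometrically while $|f_{k+1}-f_j|\le L_k(k+1-j)$ grows only linearly.

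\textbf{Step 2 (variance ratios).} Show that $\hat\Sigma^{(i)}_{k+1}/(\nu\rho)^2$ is of order $\lambda^{-(k+1-p_k^{(i)})}/\nu^2$, up to a factor in $[1,\,1+O(\lambda\ln\frac1\lambda)]$ in the denominator sums. Consequently, since $\nu\le\nu^*$ and $\lambda$ is small, every such ratio is at least $1/\nu^{*2}$. The ratio for the current input, with $p_k^{(\iota_k)}=k$, is the smallest, and an older neighbour's ratio dominates by a factor $\gtrsim\lambda^{-1}$.

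\textbf{Step 3 (local model).} Insert these bounds into \eqref{eq:h_three_points}. The correction term moves each $h^{(i)}$ toward the line through the others in proportion to its relative uncertainty. The quantity to control is $h^{(\iota_k)}_{k+1}-h^{(\iota_k\pm1)}_{k+1}$. Suppose, say, $u_k<u_{k+1}^*-b$. Assumption~\ref{assum:curvature} gives
\[
f_{k+1}(u^{(\iota_k+1)})-f_{k+1}(u^{(\iota_k)})\ge L_b|u^{(\iota_k+\frac12)}-u_{k+1}^*|\ge L_b b/2 .
\]
The same bound holds on the left side, where $f$ is increasing as well, with even larger gaps further left. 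The aim is to show that the sign pattern of $h$ reproduces this monotonicity, so the argmax case of \eqref{eq:optimization2} returns $\iota_k+1$, or the forced-perturbation case returns the correct side.

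\textbf{Step 4 (case analysis).} This is the main obstacle. When a neighbour was measured long ago, the staleness term $L_k(k+1-p_k^{(i)})$ in its estimate can be large. Its large variance then pushes $h$ at that neighbour onto the straight-line extrapolation $2h^{(\iota_k)}-h^{(\text{other})}$. Argue by cases on which neighbour is older:
\begin{itemize}
\item If the older neighbour is the uphill one, the extrapolation from the fresh downhill neighbour already points uphill, by the slope bound above.
\item If the older neighbour is the downhill one, the fresh uphill neighbour has an accurate estimate, and it exceeds $h^{(\iota_k)}$ by at least $L_bb/2-O(\rho+L_k)$.
\end{itemize}

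Staleness is itself bounded: the $\tau$-forcing rule guarantees a perturbation to the least recently visited neighbour whenever the current point is barely optimal. A point can therefore remain stale only while the algorithm has just been moving monotonically, and in that regime the fresh neighbour's information suffices. Choosing $b$ so that $L_bb/2$ dominates $2\rho+2L_k+\tau$, plus the $\epsilon(\lambda^*)$ remainders, closes the argument. Lemma~\ref{lem:convergence_bounds} then yields \eqref{eq:lem_sup} and \eqref{eq:lem_lim_sup}. The first steps $k\le1$ are absorbed into $c_1$ through $k_0$ and Lemma~\ref{lem:maximizer}.
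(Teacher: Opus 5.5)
Your reduction to Lemma~\ref{lem:convergence_bounds} and the variance-domination intuition are sound. The gap is in Step~4, which needs the more recently visited neighbour to have been measured at time $k-1$, so that its estimate error is $O(\rho+L_k)$. Nothing in your argument secures this, and the justification you give misreads \eqref{eq:optimization2}. The forcing branch only fires when $h^{(\iota_k)}_{k+1}$ exceeds the most recently measured neighbour by between $0$ and $\tau$. If the current point is maximal and beats that neighbour by more than $\tau$, the input is held, and you have no bound on how long. During a hold of length $m$, the error bounds at both neighbours grow like $L_k m$. By Lemma~\ref{lem:maximizer}, the optimum needs $m\gtrsim(b/L^*)^2$ steps to drift away by $b$ from an input held near it. So at the first time $|u_k-u^*_{k+1}|\ge b$, your available error bound is of order $L_b b^2/\Delta_u$, which grows faster in $b$ than the signal $L_b b/2$. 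No uniform $b$ comes out of Step~4.

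The missing ingredient, which is the core of the paper's proof, is that for $\lambda\le\lambda^*$ the input is never held. This is shown by a first-failure argument run jointly with the estimates.

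Assume $|u_{r+1}-u_r|=\Delta_u$ for all $r<k$. Then the recent neighbour is $\iota_{k-1}$ with $p_k^{(\iota_{k-1})}=k-1$, and the third point has $p=p_k^{(\iota_{k-q_k})}\le k-3$. The ratio $\bigl((\nu\rho)^2+4\hat\Sigma^{(\iota_k)}_{k+1}+\hat\Sigma^{(\iota_{k-1})}_{k+1}\bigr)/\hat\Sigma^{(\iota_{k-q_k})}_{k+1}$ decays like $\lambda^{k-1-p}$ (up to a polynomial factor for $M=1$). This beats the polynomial growth in $k-p$ of $|\hat\mu^{(\iota_{k-1})}_{k+1}-2\hat\mu^{(\iota_k)}_{k+1}+\hat\mu^{(\iota_{k-q_k})}_{k+1}|$. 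It yields two bounds:
$|\hat\mu^{(i)}_{k+1}-f_{k+1}(u^{(i)})|\le d$ for $i\in\{\iota_k,\iota_{k-1}\}$, and $|h^{(i)}_{k+1}-l^{(i)}_{k+1}|\le\alpha_1(\lambda^*)+\alpha_2(\lambda^*)|u_k-u^*_{k+1}|$, where $l$ is the linear extrapolation through the two most recent points.

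From these, $|u_k-u^*_{k+1}|\ge b$ forces a step toward the optimum once $\alpha_2(\lambda^*)\le L_b/4$. The remainder grows with the distance, so enlarging $b$ alone would not absorb it.

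Now suppose $k_1$ is the first time the input is held. Holding requires $h^{(\iota_{k_1})}_{k_1+1}$ to be maximal and to beat $h^{(\iota_{k_1-1})}_{k_1+1}$ by more than $\tau$. Hence $h^{(\iota_{k_1}-1)}_{k_1+1}-2h^{(\iota_{k_1})}_{k_1+1}+h^{(\iota_{k_1}+1)}_{k_1+1}\le-\tau$. But such a $k_1$ must have $|u_{k_1}-u^*_{k_1+1}|<b$. There, near-linearity bounds this second difference by $4\alpha_1(\lambda^*)+4\alpha_2(\lambda^*)b$, which is below $\tau$ for small $\lambda^*$, a contradiction.

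So what bounds staleness is the near-linearity of the local model together with the $\tau$-margin needed to hold, not the forcing branch. With this in place, the two neighbours always have $k-p_k^{(i)}=1$ and $k-p_k^{(i)}\ge 3$ respectively, and your case analysis reduces to the paper's.
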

\begin{proof}
The proof is given in Appendix \ref{app:proof_thm}	
\end{proof} 

We care to note that in simulation, the algorithm converges also for larger values of $M$. However, since the equations for the bounds become cumbersome for $M>1$, no formal proof is given here. Theorem \ref{thm:convergence} shows that the uP\&O algorithm can be designed such that convergence to a bounded region around the optimal input is guaranteed, even if the function that is maximized is time-varying. A similar analysis holds when the algorithm is used for minimization. Note that $c_1$ and $c_2$ in respectively \eqref{eq:lem_sup} and \eqref{eq:lem_lim_sup} are related to the function properties in Assumption \ref{assum:curvature} and \ref{assum:time_change}. They decrease for increasing $L_u$ (steeper function away from the maximizer) and decreasing $L_k$ (smaller variation over time). 

\begin{rem}
	The proof of Theorem \ref{thm:convergence} is based on the limit case where $\lambda^*$ is chosen so that the algorithm perturbs at every time step. The actual number of perturbations needed to track the maximum depends on how fast the function varies over time, and is related to $L_k$ in Assumption \ref{assum:time_change}. This potential reduction in the number of perturbations is the main advantage of uP\&O compared to P\&O, and can lead to significant performance improvements, e.g., reduced cost as less time is spend away from the maximum (i.e., away from optimal performance), as is illustrated in the next section.
\end{rem}

\section{Illustrative example} \label{sec:sims}

In this section, the presented method is applied to a case study of the optimization of a photo-voltaic solar array. First, the problem setup is introduced, and second, simulation results are provided for the presented uP\&O method, standard P\&O, and the alternative input selection methods from Section \ref{subsec:other_inputs}.

\subsection{Problem setup}

\begin{figure}[t]
	\centering
	\setlength\figureheight{.4\figurewidth}
	\includegraphics{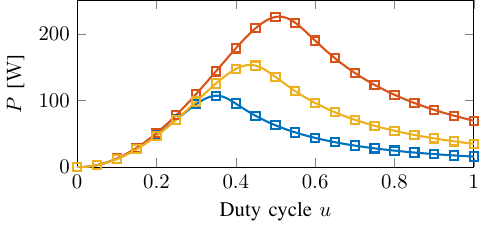}
	\caption{In the simulation example, function $f_k(u)$ is the steady-state produced power $P$ of a photovoltaic array as a function of the duty cycle $u$, shown at time indices $k=50$ (\protect\blueline), $k=150$ (\protect\redline) and $k=250$ (\protect\yelline). The squares indicate the possible inputs. \label{fig:graph}}
\end{figure}

\begin{table}[t]
	\centering
	\caption{Parameters for the photovoltaic model. \label{tab:pars}}
	\begin{tabular}{|l|l|} \hline
		Parameter & Description \\ \hline
		$T_r =$ \SI{298.15}{\kelvin} & Reference temperature \\	
		$I_s =$ \SI{5.61}{\ampere} & Reference short-circuit current at $T_r$ \\
		$I_0 =$ \SI{1.13e-6}{\ampere} & Nominal reverse saturation current at $T_r$ \\	
		$k_i =$ \SI{1.96e-3}{\ampere/ \kelvin} & Short-circuit current temperature coefficient \\
		$N =$ 1.81 & Ideality factor \\
		$E_g =$ \SI{1.16}{\electronvolt} & Band gap energy for silicon \\
		$k =$ \SI{1.38e-23}{\joule / \kelvin} & Boltzmann constant \\
		$q =$ \SI{1.60e-19}{\coulomb} & Charge of an electron \\
		$n_s =$ 72 & Number of photo-voltaic cells in series \\
		$R_s =$ \SI{2.83e-3}{\ohm} & Series resistance \\
		$R_p =$ \SI{8.7}{\ohm} & Parallel resistance \\ 
		$C_c =$ \SI{1}{\milli \farad} & Converter's capacitance \\
		$L_c =$ \SI{5}{\milli \henry} & Converter's inductance \\
		$R_c =$ \SI{2}{\ohm} & Converter's resistance \\		
		\hline
	\end{tabular}
\end{table}

Consider a photo-voltaic array with $n_s$ cells in series, which is affected by varying sunlight and temperature conditions throughout a day. Based on the model from \cite{Vachtsevanos1987}, the light-generated current $i_s$ and the reverse saturation current $i_0$ are given by
\begin{align} \label{eq:is}
	i_s &= \left(I_s + k_i(T-T_r)\right) \frac{S}{1000} \\
	i_0 &= I_0 \left(\frac{T}{T_r}\right)^3 e^{\frac{E_g}{NV_t} \left(\frac{T}{T_r} -1 \right)}, \quad V_t = \frac{kT}{q}, \label{eq:io}
\end{align}
for a photovoltaic cell with temperature $T$ and solar irradiance $S$. All parameters are given in Table \ref{tab:pars}. The output current $i$ of an array with $n_s$ cells in series is given by
\begin{align}
	i = i_s - i_0\left( e^{\frac{v+iR_s n_s}{N V_t n_s}}-1\right) - \frac{v+i R_s n_s}{R_p n_s}. \label{eq:i}
\end{align}
Here, $v$ denotes the output voltage. A dc-dc buck converter is used to connect each array to a dc load. The converter dynamics are represented by the model in \cite{Li2016} as 
\begin{align}
	C_c \dot{v} &= i(v;T,S)-i_L u \\
	L_c \dot{i}_L &= -i_L R_c + vu, \label{eq:doti}
\end{align}
with $i(v;T,S)$ the nonlinear mapping \eqref{eq:i} from the output voltage $v$ to the output current. Using \eqref{eq:is}-\eqref{eq:doti}, the nonlinear steady-state mapping between the duty cycle $u$ and the produced power $P$ can be computed for any combination of temperature and solar irradiance.

\begin{figure}[t]
	\centering
	\setlength\figurewidth{.8\defcolwidth}
	\setlength\figureheight{.36\defcolwidth}
	\includegraphics{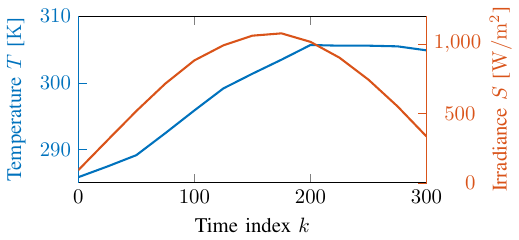}
	\caption{Temperature (\protect\blueline) and solar irradiance (\protect\redline) between 6AM ($k=0$) and 6PM ($k=300$) on a clear and sunny day. \label{fig:temp}}
\end{figure}

Fig. \ref{fig:temp} shows possible values for the temperature and irradiance between 6AM (corresponding to $k=0$) and 6PM (corresponding to $k=300$) on a clear day. The varying temperature and irradiance values lead to variations in the mapping from the duty cycle $u$ to the produced power, as illustrated for $k=50$, $k=150$ and $k=250$ in Fig. \ref{fig:graph}. The aim is to find and track the optimal duty cycle that maximizes the produced power throughout the day. It is assumed that noisy measurements of the produced power are available, where the noise is zero-mean Gaussian white noise with a standard deviation of $5$. All simulations use the same noise realization.

\subsection{Comparison between standard and uncertainty-based P\&O}

To compare the performance of the presented uncertainty-based perturb and observe to standard perturb and observe, both methods are applied to the problem of finding and tracking the duty cycle that maximizes the energy output for the solar array. For uP\&O, the parameters are chosen as $\lambda = e^{-0.5}$, $\nu = 3$, $M=1$ and $\rho = 5$. In Fig. \ref{fig:input}, it is shown that uP\&O requires far fewer perturbations to track the time-varying optimal input compared to P\&O, spending only 92 instead of 172 of the 300 time steps away from the optimal input. Fig. \ref{fig:cost} shows how this reduction in perturbations affects the total power generated by the solar array. The presented uP\&O outperforms P\&O by 2.5\%, and generates 7.8\% more power compared to using the best constant duty cycle. If the optimal duty cycle depicted in Fig. \ref{fig:input} is used, the performance is only 1.8\% better than that of uP\&O.

\begin{figure}[t]
	\centering
	\setlength\figureheight{.4\defcolwidth}
	\includegraphics{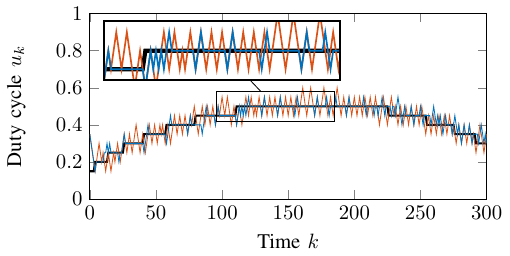}
	\caption{Uncertainty-based P\&O (\protect\blueline) uses far fewer perturbations than standard P\&O (\protect\redline) to track the time-varying optimal input (\protect\blackline). \label{fig:input}}
\end{figure}

\begin{figure}[t]
	\centering
	\setlength\figureheight{.4\defcolwidth}
	\includegraphics{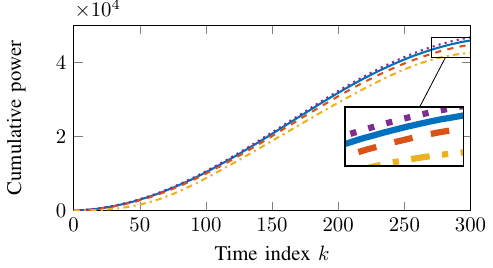}
	\caption{Using uncertainty-based P\&O (\protect\blueline) leads to a 2.5\% increase in total power output over a day compared to standard P\&O (\protect\reddash). The uP\&O output is 7.8\% higher than the maximum power output using a constant duty cycle (\protect\yeldashdot), and only 1.8\% lower than the maximum output if the optimal duty cycle is always known exactly (\protect\purdots).  \label{fig:cost}}
\end{figure}

\subsection{Comparison to other sampling methods}

The main advantage of the uncertainty-based perturb-and-observe method presented in this paper is its ability to reduce the number of perturbations compared to P\&O, while maintaining fast tracking and convergence properties. Other sampling methods, such as the previous uP\&O method from \cite{Aarnoudse2025a} that is based on computing the optimal input over a fixed horizon, highest expected improvement and Thompson sampling, see Section \ref{subsec:sampling} for more details, are capable of fast tracking with reduced perturbations, but similar convergence properties cannot be derived easily. Still, it is useful to compare these methods in simulation.

\begin{figure}[t]
	\centering
	\setlength\figureheight{.4\defcolwidth}
	\includegraphics{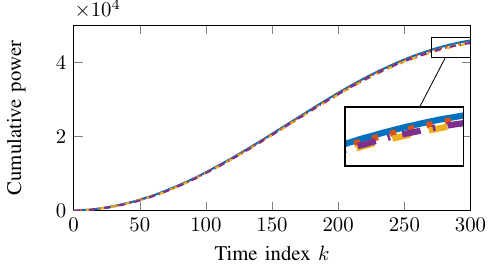}
	\caption{The presented uncertainty-based P\&O method (\protect\blueline) outperforms the uP\&O method from \cite{Aarnoudse2025a} (\protect\reddots), as well as highest expected improvement sampling (\protect\yeldash) and Thompson sampling (\protect\purdashdot), but the differences in performance are small.  \label{fig:cost2}}
\end{figure}

All methods use the model from \cite{Aarnoudse2025a}, which is equivalent to the model $\mathcal{M}$ in Section \ref{sec:approach} for $M=0$. In the uP\&O method from \cite{Aarnoudse2025a}, $\lambda = 0.88$ is used, while the highest expected improvement and Thompson sampling methods use $\lambda = 0.95$. In addition, highest expected improvement sampling is implemented using $\alpha = 0.0001$. The results in Fig. \ref{fig:cost2} show that the differences in performance for these methods are small. In terms of the number of perturbations, the uP\&O method of \cite{Aarnoudse2025a}, highest expected improvement and Thompson sampling spend, respectively, 122, 144 and 149 time steps away from the optimal input, compared to 92 for the uP\&O method presented here. While these numbers depend strongly on tuning, the results demonstrate that all tested sampling methods lead to improvements compared to standard P\&O. The main advantage of the presented uP\&O method is that convergence can be guaranteed, as shown in Section \ref{sec:convergence}.

\section{Conclusions} \label{sec:conclusions}

A new uncertainty-based perturb-and-observe method is presented that significantly reduces the number of perturbations needed to track time-varying optima compared to existing P\&O methods, while also exhibiting guaranteed convergence properties. This improves the industrial applicability of model-free optimization methods that can improve the performance of highly uncertain, time-varying systems, for which model-based optimization solutions are often too conservative. The method employs a model of the time-varying function based on measurements, the uncertainty of which increases over time. Based on this model, a second, local model of the performance cost at the current input and its neighboring inputs is constructed, which balances between exploration and exploitation while selecting the input for the next time step. The model can be updated in a computationally efficient manner through recursive expressions, even when a large number of previous measurements is used. A convergence analysis shows that, under mild conditions, uP\&O converges to a bounded region around the time-varying optimum and remains there. In addition, simulation results demonstrate that uP\&O can outperform standard P\&O and reduces the number of perturbations significantly. Future work includes experimental implementation, inclusion of feedforward based on measurements or limited model knowledge, and the case of multiple inputs.

\appendices

\section{Proof of Lemma \ref{lem:mean_variance}} \label{app:proof_mean_variance}

\begin{proof}[of Lemma \ref{lem:mean_variance}]
First, it follows from \eqref{eq:zeta} that
\begin{align}
	&\zeta_{j,k+1,q} = \frac{1}{q!} \left(\ln\left(\frac{1}{\lambda}\right)(k + 1 - j)\right)^q \lambda^{k + 1-j}\\ \nonumber
	&= \frac{\lambda}{q!} \left(\ln\left(\frac{1}{\lambda}\right)\right)^q \sum_{r = 0}^q \binom{q}{r} (k - j)^r \lambda^{k-j}\\ \nonumber
	&=\sum_{r = 0}^q \frac{\lambda}{(q-r)!} \left(\ln\left(\frac{1}{\lambda}\right)\right)^{q-r}   \frac{1}{r!} \left(\ln\left(\frac{1}{\lambda}\right)(k - j)\right)^r \lambda^{k-j}\\ \nonumber
	&= \sum_{r = 0}^q \frac{\lambda}{(q-r)!} \left(\ln\left(\frac{1}{\lambda}\right)\right)^{q-r}   \zeta_{j,k,r}.
\end{align}
Now, using the definitions of $\xi_{k,q}^{(i)}$ in \eqref{eq:xi} and $\phi_{k,q}^{(i)}$ in \eqref{eq:phi}, and the definitions of the set $\mathcal{J}_k^{(i)}$ in \eqref{eq:J} and $\zeta_{j,k,q}$ in \eqref{eq:zeta},  we obtain that
\begin{align} \label{eq:proof_lem_mv1}
		&\xi_{k+1,q}^{(i)} = \sum_{j \in \mathcal{J}_{k+1}^{(i)}} \zeta_{j,k+1,q} y_j \\ \nonumber
		&= \sum_{j \in \mathcal{J}_{k}^{(i)}} \zeta_{j,k+1,q} y_j + 
		\begin{cases}
			y_{k+1}, & \mbox{if $u_{k+1} = u^{(i)}$ and $q = 0$}\\ 
			0, & \mbox{otherwise.}
		\end{cases}
\end{align} 
Note that the sets $\mathcal{J}_{k+1}^{(i)}$ and $\mathcal{J}_{k}^{(i)}$ are equal if $u_{k+1} \neq u^{(i)}$, since there is no new measurement at $u^{(i)}$ in that case. In addition, if $q\neq 0$, $\zeta_{j,k,q} = 0$ for $j=k$, see \eqref{eq:zeta}. The first part in the right-hand side of \eqref{eq:proof_lem_mv1} can be rewritten to
\begin{align} \nonumber
			\sum_{j \in \mathcal{J}_{k}^{(i)}} \zeta_{j,k+1,q} y_j &= \sum_{r = 0}^q \frac{\lambda}{(q-r)!} \left(\ln\left(\frac{1}{\lambda}\right)\right)^{q-r} \sum_{j \in \mathcal{J}_{k}^{(i)}}  \zeta_{j,k,r} y_j \\ 
			&= \sum_{r = 0}^q \frac{\lambda}{(q-r)!} \left(\ln\left(\frac{1}{\lambda}\right)\right)^{q-r} \xi_{k,r}^{(i)} 
		\end{align}
and, similarly,
\begin{align} \nonumber
		\phi_{k+1,q}^{(i)} =  & \sum_{j \in \mathcal{J}_{k+1}^{(i)}} \zeta_{j,k+1,q} 
		=  \sum_{r = 0}^q \frac{\lambda}{(q-r)!} \left(\ln\left(\frac{1}{\lambda}\right)\right)^{q-r} \phi_{k,r}^{(i)}  \\  
		&+ \begin{cases}
			1, & \mbox{if $u_{k+1} = u^{(i)}$ and $q = 0$}\\
			0, & \mbox{otherwise}
		\end{cases}.
\end{align}
Hence, both $\xi_{k+1,q}^{(i)}$ and $\phi_{k+1,q}^{(i)}$ can be expressed as linear functions of $\xi_{k,r}^{(i)}$ (and $y_{k+1}$) and $\phi_{k,r}^{(i)}$, respectively, for $r \leq q$, and these functions can be rewritten to \eqref{eq:xi_update} and \eqref{eq:phi_update}.
\end{proof}

\section{Proof of Lemma \ref{lem:model_solution}} \label{app:proof_model_solution}

\begin{proof}[of Lemma \ref{lem:model_solution}]
	The minimum-variance solution of the model $\mathcal{S}_{k+1}^{(\iota_k)}$ in \eqref{eq:model_S} for the variables $\tilde{f}^{(\iota_k-1)}_{k+1}$, $\tilde{f}^{(\iota_k)}_{k+1}$, and $\tilde{f}^{(\iota_k+1)}_{k+1}$ is given by
	\begin{align} \nonumber
		&\begin{bmatrix}
			\tilde{f}^{(\iota_k-1)}_{k+1|\mathcal{S}_{k+1}^{(\iota_k)}} \\ \tilde{f}^{(\iota_k)}_{k+1|\mathcal{S}_{k+1}^{(\iota_k)}} \\ \tilde{f}^{(\iota_k+1)}_{k+1|\mathcal{S}_{k+1}^{(\iota_k)}}
		\end{bmatrix} = \begin{bmatrix}
			\hat{\mu}_{k+1}^{(\iota_k-1)} \\ \hat{\mu}_{k+1}^{(\iota_k)} \\ \hat{\mu}_{k+1}^{(\iota_k+1)}
		\end{bmatrix} - 
		\frac{\hat{\mu}_{k+1}^{(\iota_k-1)} - 2\hat{\mu}_{k+1}^{(\iota_k)} + \hat{\mu}_{k+1}^{(\iota_k+1)}}{1 + \frac{\hat{\Sigma}_{k+1}^{(\iota_k-1)}}{\delta^2} + 4\frac{\hat{\Sigma}_{k+1}^{(\iota_k)}}{\delta^2} + \frac{\hat{\Sigma}_{k+1}^{(\iota_k+1)}}{\delta^2}} \times \\ \nonumber 
		& \quad \begin{bmatrix}
			\frac{\hat{\Sigma}_{k+1}^{(\iota_k-1)}}{\delta^2} \\ -2\frac{\hat{\Sigma}_{k+1}^{(\iota_k)}}{\delta^2} \\ \frac{\hat{\Sigma}_{k+1}^{(\iota_k+1)}}{\delta^2}
		\end{bmatrix} 
		- \delta  \begin{bmatrix}
			\frac{\sqrt{\hat{\Sigma}_{k+1}^{(\iota_k-1)}}}{\delta} \varepsilon_{k+1}^{(\iota_k-1)} + \frac{\hat{\Sigma}_{k+1}^{(\iota_k-1)}}{\delta^2} e^{(\iota_k)}_{k+1} \\ \frac{\sqrt{\hat{\Sigma}_{k+1}^{(\iota_k)}}}{\delta} \varepsilon_{k+1}^{(\iota_k)} - 2\frac{\hat{\Sigma}_{k+1}^{(\iota_k)}}{\delta^2} e^{(\iota_k)}_{k+1} \\ \frac{\sqrt{\hat{\Sigma}_{k+1}^{(\iota_k+1)}}}{\delta} \varepsilon_{k+1}^{(\iota_k+1)} + \frac{\hat{\Sigma}_{k+1}^{(\iota_k+1)}}{\delta^2} e^{(\iota_k)}_{k+1} \end{bmatrix}\\ \nonumber
		& \quad + \delta \left( \frac{ \frac{\sqrt{\hat{\Sigma}_{k+1}^{(\iota_k)}}}{\delta} \varepsilon_k^{(\iota_k)} - 2  \frac{\sqrt{\hat{\Sigma}_{k+1}^{(\iota_k)}}}{\delta} \varepsilon_{k+1}^{(\iota_k)} +  \frac{\sqrt{\hat{\Sigma}_{k+1}^{(\iota_k+1)}}}{\delta} \varepsilon_k^{(\iota_k+1)}}{1 + \frac{\hat{\Sigma}_{k+1}^{(\iota_k-1)}}{\delta^2} + 4\frac{\hat{\Sigma}_{k+1}^{(\iota_k)}}{\delta^2} + \frac{\hat{\Sigma}_{k+1}^{(\iota_k+1)}}{\delta^2}} \right. \\ 
		& \quad	+\left. \frac{ \left( \frac{\hat{\Sigma}_{k+1}^{(\iota_k-1)}}{\delta^2} + 4\frac{\hat{\Sigma}_{k+1}^{(\iota_k)}}{\delta^2} + \frac{\hat{\Sigma}_{k+1}^{(\iota_k+1)}}{\delta^2}\right) e^{(\iota_k)}_k}{1 + \frac{\hat{\Sigma}_{k+1}^{(\iota_k-1)}}{\delta^2} + 4\frac{\hat{\Sigma}_{k+1}^{(\iota_k)}}{\delta^2} + \frac{\hat{\Sigma}_{k+1}^{(\iota_k+1)}}{\delta^2}} \right)
		\begin{bmatrix}
			\frac{\hat{\Sigma}_{k+1}^{(\iota_k-1)}}{\delta^2} \\ -2\frac{\hat{\Sigma}_{k+1}^{(\iota_k)}}{\delta^2} \\ \frac{\hat{\Sigma}_{k+1}^{(\iota_k+1)}}{\delta^2}
		\end{bmatrix}
		.
	\end{align}
	The mean of this solution is derived as follows: 
	{\small \begin{align}
			& \mathbb{E}  \begin{bmatrix}
				\tilde{f}^{(\iota_k-1)}_{k+1|\mathcal{S}_{k+1}^{(\iota_k)}} \\ \tilde{f}^{(\iota_k)}_{k+1|\mathcal{S}_{k+1}^{(\iota_k)}} \\ \tilde{f}^{(\iota_k+1)}_{k+1|\mathcal{S}_{k+1}^{(\iota_k)}}
			\end{bmatrix} = \\ \nonumber &=  \begin{bmatrix}
				\hat{\mu}_{k+1}^{(\iota_k-1)} \\ \hat{\mu}_{k+1}^{(\iota_k)} \\ \hat{\mu}_{k+1}^{(\iota_k+1)}
			\end{bmatrix}   - \frac{\hat{\mu}_{k+1}^{(\iota_k-1)} - 2\hat{\mu}_{k+1}^{(\iota_k)} + \hat{\mu}_{k+1}^{(\iota_k+1)}}{1 + \frac{\hat{\Sigma}_{k+1}^{(\iota_k-1)}}{\delta^2} + 4\frac{\hat{\Sigma}_{k+1}^{(\iota_k)}}{\delta^2} + \frac{\hat{\Sigma}_{k+1}^{(\iota_k+1)}}{\delta^2}} \begin{bmatrix}
				\frac{\hat{\Sigma}_{k+1}^{(\iota_k-1)}}{\delta^2} \\ -2\frac{\hat{\Sigma}_{k+1}^{(\iota_k)}}{\delta^2} \\ \frac{\hat{\Sigma}_{k+1}^{(\iota_k+1)}}{\delta^2}
			\end{bmatrix}
			\end{align} 
			\begin{align} \nonumber
			&=  - \frac{\frac{\mathbf{c}^T \mathbf{A} \boldsymbol\xi_k^{(\iota_k-1)}}{\mathbf{c}^T \mathbf{A} \boldsymbol\phi_k^{(\iota_k-1)}} - 2 \frac{\mathbf{c}^T \mathbf{A} \boldsymbol\xi_k^{(\iota_k)}}{\mathbf{c}^T \mathbf{A} \boldsymbol\phi_k^{(\iota_k)}} + \frac{\mathbf{c}^T \mathbf{A} \boldsymbol\xi_k^{(\iota_k+1)}}{\mathbf{c}^T \mathbf{A} \boldsymbol\phi_k^{(\iota_k+1)}}}{1 + \frac{1}{\nu^2\mathbf{c}^T \mathbf{A} \boldsymbol\phi_k^{(\iota_k-1)}} + 4\frac{1}{\nu^2\mathbf{c}^T \mathbf{A} \boldsymbol\phi_k^{(\iota_k)}} + \frac{1}{\nu^2\mathbf{c}^T \mathbf{A} \boldsymbol\phi_k^{(\iota_k+1)}}} \\ \nonumber & \qquad \times \begin{bmatrix}
				\frac{1}{\nu^2\mathbf{c}^T \mathbf{A} \boldsymbol\phi_k^{(\iota_k-1)}} \\ -2\frac{1}{\nu^2\mathbf{c}^T \mathbf{A} \boldsymbol\phi_k^{(\iota_k)}} \\ \frac{1}{\nu^2 \mathbf{c}^T \mathbf{A} \boldsymbol\phi_k^{(\iota_k+1)}}
			\end{bmatrix} + \begin{bmatrix}
			\frac{\mathbf{c}^T \mathbf{A} \boldsymbol\xi_k^{(\iota_k-1)}}{\mathbf{c}^T \mathbf{A} \boldsymbol\phi_k^{(\iota_k-1)}} \\ \frac{\mathbf{c}^T \mathbf{A} \boldsymbol\xi_k^{(\iota_k)}}{\mathbf{c}^T \mathbf{A} \boldsymbol\phi_k^{(\iota_k)}} \\ \frac{\mathbf{c}^T \mathbf{A} \boldsymbol\xi_k^{(\iota_k+1)}}{\mathbf{c}^T \mathbf{A} \boldsymbol\phi_k^{(\iota_k+1)}}
			\end{bmatrix} = \begin{bmatrix}
				\frac{\theta_{k}^{(\iota_k-1)}}{\zeta_k^{(\iota_k)}} \\ \frac{\theta_{k}^{(\iota_k)}}{\zeta_k^{(\iota_k)}} \\ \frac{\theta_{k}^{(\iota_k+1)}}{\zeta_k^{(\iota_k)}}
			\end{bmatrix},
	\end{align}}
	with $\theta_{k}^{(\iota_k-1)}$, $\theta_{k}^{(\iota_k)}$, $\theta_{k}^{(\iota_k+1)}$, $\zeta_k^{(\iota_k)}$ and $\nu$ according to Lemma \ref{lem:model_solution}.
\end{proof}

\section{Proof of Lemma \ref{lem:maximizer}} \label{app:proof_lem_maximizer}

\begin{proof}[of Lemma \ref{lem:maximizer}]
To prove the lemma, we show that, for any $N \in \mathbb{N}$, the maximizer $u_{k+N}^*$ cannot be too far from the previous maximizer $u_k^*$ (such that the bound in \eqref{eq:lem_maximizer} is satisfied) under the given assumptions. We apply the following reasoning. For any $i \geq \iota_k$, it follows from Assumptions~\ref{assum:maximum} and \ref{assum:curvature}, and from $u_k^* = \iota_k^* \Delta_u$ that 
\begin{align}
		&f_k(u_k^*) - f_k(u^{(i)}) = \sum_{j = \iota_k^*}^{i-1} \left( f_k(u^{(j)}) - f_k(u^{(j+1)}) \right)\\ \nonumber
		& \geq L_b \sum_{j = \iota_k^*}^{i-1} \left( u^{(j+\frac{1}{2})} - u_k^* \right) = L_b \Delta_u \sum_{j = 0}^{i-\iota_k^*-1} \left( j + \frac{1}{2} \right) \\ \nonumber
		&= \frac{L_b \Delta_u}{2} (i-\iota_k^*)^2 .
\end{align}
Similarly, for any $i \leq \iota_k$, we can use the same arguments to show that
\begin{align}
		&f_k(u_k^*) - f_k(u^{(i)}) \geq \frac{L_b \Delta_u}{2} (\iota_k^*-i)^2 .
\end{align}
Thus, for all $i \in \mathbb{Z}$, we have that
\begin{equation} \label{eq:proof_Lb}
	f_k(u_k^*) - f_k(u^{(i)}) \geq \frac{L_b \Delta_u}{2} |i-\iota_k^*|^2 .
\end{equation}
By applying the triangle inequality and the inequality in Assumption~\ref{assum:time_change}, it follows that
\begin{align} \nonumber
	|f_{k+N}(u^{(i)}) - f_{k}(u^{(i)})| &= \left|\sum_{j = 0}^{N-1} \left( f_{k+j+1}(u^{(i)}) - f_{k+j}(u^{(i)}) \right) \right| \\
	& \leq N L_k,
\end{align}
for all $i \in \mathbb{Z}$ and all $N \in \mathbb{N}$, which implies that
\begin{equation} \label{eq:proof_Lk}
	f_{k}(u^{(i)}) - N L_k \leq f_{k+N}(u^{(i)}) \leq f_{k}(u^{(i)}) + N L_k,
\end{equation}
for all $i \in \mathbb{Z}$  and all $N \in \mathbb{N}$. Combining the inequalities in \eqref{eq:proof_Lb} and \eqref{eq:proof_Lk} yields
\begin{equation} \label{eq:proof_combined}
	f_{k+N}(u_k^*) - f_{k+N}(u^{(i)}) \geq \frac{L_b \Delta_u}{2} |i-\iota_k^*|^2 - 2N L_k,
\end{equation}
for all $i \in \mathbb{Z}$ and all $N \in \mathbb{N}$. Now, if the right-hand side of \eqref{eq:proof_combined} is larger than zero for some $i \in \mathbb{Z}$ and some $N \in \mathbb{N}$, then we have that $f_{k+N}(u^{(i)}) < f_{k+N}(u_k^*) \leq \max_{u \in \mathcal{U}} f_{k+N}(u)$ and, thus, $u^{(i)} \neq u_{k+N}^*$ by definition; see Assumption~\ref{assum:maximum}. The right-hand side of \eqref{eq:proof_combined} is larger than zero if and only if
\begin{equation}
	|i-\iota_k^*| > 2 \sqrt{\frac{L_k}{L_b \Delta_u}N}.
\end{equation}
It follows that $u^{(i)} \neq u_{k+N}^*$ if $|i-\iota_k^*| > 2\sqrt{\frac{L_k}{L_b \Delta_u}N}$, or, equivalently, $u^{(i)} \neq u_{k+N}^*$ if
\begin{equation}
	\left| u^{(i)} - u_k^* \right| > 2 \sqrt{\frac{L_k \Delta_u}{L_b}N}.
\end{equation}
Hence, all points $u^{(i)}$ that are too far removed from $u_k^*$ (i.e., for which $|i-\iota^*_k|$ is too large) are eliminated as possible locations of $u_{k+N}^*$. Then, we obtain that
\begin{equation} \label{eq:proof_step1}
	\left| u_{k+N}^* - u_k^* \right| \leq 2 \sqrt{\frac{L_k \Delta_u}{L_b}N},
\end{equation}
for all $N \in \mathbb{N}$, which implies that \eqref{eq:lem_maximizer} holds with $L^* = 2 \sqrt{\frac{L_k \Delta_u}{L_b}}$.
\end{proof}

\section{Proof of Lemma \ref{lem:convergence_bounds}} \label{app:proof_lem_convergence_bounds}

\begin{proof}[of Lemma \ref{lem:convergence_bounds}]
The proof consists of two steps. First, we show that for a sufficiently large value of $N \in \mathbb{N}$, we have that $|u_k - u_k^*| \geq \gamma$ implies that $|u_{k+N} - u_{k+N}^*| \leq |u_k - u_k^*| - \Delta_u$ for any $k \geq k_0$ and some positive constant $\gamma \in \mathbb{R}_{>0}$ under the conditions of the lemma. Second, we show that this means that the inequalities in \eqref{eq:lem_sup} and \eqref{eq:lem_lim_sup} hold for some positive constants $c_1$ and $c_2$.

\textit{Step 1}:
We obtain from Lemma~\ref{lem:maximizer} that
\begin{equation} \label{eq:proof_bound1}
	|u_{k+i}^* - u_k^*| \leq  L^* \sqrt{N},
\end{equation}
for all $i \in \{0,1,\dots,N\}$. Moreover, it follows from the condition in the lemma that $|u_{k+1} - u_k| \leq \Delta_u$ for all $k \in \mathbb{N}$ that
\begin{equation} \label{eq:proof_bound2}
	|u_{k+i} - u_k| \leq (N - 1) \Delta_u,
\end{equation}
for all $i \in \{0,1,\dots,N-1\}$. Let $\gamma =  L^* \sqrt{N} + (N-1) \Delta + b$. By combining the bounds in \eqref{eq:proof_bound1} and \eqref{eq:proof_bound2}, we obtain that $|u_k - u_{k}^*| \geq \gamma$ implies that
\begin{equation}
	|u_{k+i} - u_{k+i+1}^*| \geq b,
\end{equation}
for all $i \in \{0,1,\dots,N-1\}$. From the conditions in the lemma, it follows that $|u_k - u_k^*| \geq \gamma$ for some $k \geq k_0$ yields
\begin{equation} \label{eq:proof_bound3}
	|u_{k+i+1} - u_{k+i+1}^*| = |u_{k+i} - u_{k+i+1}^*| - \Delta_u,
\end{equation}
for all $i \in \{0,1,\dots,N-1\}$. Note that \eqref{eq:proof_bound3} implies that $u_{k+i+1} = u_{k+i} - \Delta_u$ if $u_{k+i} > u_{k+i+1}^*$ and $u_{k+i+1} = u_{k+i} + \Delta_u$ if $u_{k+i} < u_{k+i+1}^*$. From \eqref{eq:proof_bound1} and \eqref{eq:proof_bound2}, it follows that, if $|u_k - u_k^*| \geq \gamma$, we have that $u_{k+i} > u_{k}^*$ if $u_{k+i} > u_{k+i+1}^*$ and $u_{k+i} < u_{k}^*$ if $u_{k+i} < u_{k+i+1}^*$. Thus, we obtain that $u_{k+i+1} = u_{k+i} - \Delta_u$ if $u_{k+i} > u_{k}^*$ and $u_{k+i+1} = u_{k+i} + \Delta_u$ if $u_{k+i} < u_{k}^*$, which implies that
\begin{equation}
	|u_{k+i+1} - u_{k}^*| = |u_{k+i} - u_{k}^*| - \Delta_u,
\end{equation}
for all $i \in \{0,1,\dots,N-1\}$ and all $k \geq k_0$ if $|u_k - u_k^*| \geq \gamma$. This yields
\begin{equation} \label{eq:proof_bound4}
	|u_{k+N} - u_{k}^*| = |u_{k} - u_{k}^*| - N\Delta_u,
\end{equation}
for all $k \geq k_0$. Combining \eqref{eq:proof_bound1} and \eqref{eq:proof_bound4} gives
\begin{equation} \label{eq:proof_bound5}
	|u_{k+N} - u_{k+N}^*| \leq |u_{k} - u_{k}^*| + L^* \sqrt{N} - N\Delta_u,
\end{equation}
for any $k \geq k_0$ if $|u_k - u_k^*| \geq \gamma$. By letting $N = \left\lceil \frac{L^*}{\Delta_u} + 1 \right\rceil^2$, we obtain directly from \eqref{eq:proof_bound5} that
\begin{equation} \label{eq:proof_bound6}
	|u_{k+N} - u_{k+N}^*| \leq |u_{k} - u_{k}^*| -\Delta_u,
\end{equation}
if $|u_k - u_k^*| \geq \gamma$ for any $k \geq k_0$, which completes the first step of the proof.

\textit{Step 2}:
We have established that, for any $k \geq k_0$, we have that $|u_{k+N} - u_{k+N}^*| \leq |u_{k} - u_{k}^*| -\Delta_u$ whenever $|u_k - u_k^*| \geq \gamma$. Next, note that the condition $|u_{k+1} - u_k| \leq \Delta_u$ for all $k \in \mathbb{N}$ (see the description of the lemma) implies that
\begin{equation} \label{eq:proof_bound7}
	\max_{0\leq k \leq k_0+N} |u_{k} - u_{k}^*| \leq |u_{0} - u_{0}^*| + (k_0 + N) \Delta_u.
\end{equation}
Moreover, from the same condition and \eqref{eq:proof_bound1}, we have that $|u_{k+N} - u_{k+N}^*| \leq \gamma+L^* \sqrt{N}+N\Delta_u$ whenever $|u_k - u_k^*| \leq \gamma$. This, combined with the fact that $|u_{k+N} - u_{k+N}^*| \leq |u_{k} - u_{k}^*| -\Delta_u$ whenever $|u_k - u_k^*| \geq \gamma$ (see \eqref{eq:proof_step1}), results in the bound
\begin{equation} \label{eq:proof_bound8}
	|u_{k+N} - u_{k+N}^*| \leq \max\left\{|u_{k} - u_{k}^*| -\Delta_u, \gamma+L^* \sqrt{N}+N\Delta_u \right\},
\end{equation}
for all $k \geq k_0$. By combining the bounds in \eqref{eq:proof_bound7} and \eqref{eq:proof_bound8}, we obtain that
\begin{equation} \label{eq:proof_bound9}
	\sup_{k\geq 0} |u_{k} - u_{k}^*| \leq |u_{0} - u_{0}^*| + \gamma + L^* \sqrt{N} + (k_0 + N) \Delta_u.
\end{equation}
Moreover, it follows from \eqref{eq:proof_bound8} that
\begin{equation} \label{eq:proof_bound10}
	\limsup_{k \rightarrow \infty} |u_{k} - u_{k}^*| \leq \gamma + L^* \sqrt{N} + N\Delta_u.
\end{equation}
Hence, the bounds in \eqref{eq:lem_sup} and \eqref{eq:lem_lim_sup} of the lemma are satisfied with $c_1 = \gamma + L^* \sqrt{N} + (k_0 + N) \Delta_u$ and $c_2 = \gamma + L^* \sqrt{N} + N\Delta_u$, where $\gamma =  L^* \sqrt{N} + (N-1) \Delta + b$ and $N = \left\lceil \frac{L^*}{\Delta_u} + 1 \right\rceil^2$.  	
	
\end{proof} 

\section{Proof of Theorem \ref{thm:convergence}} \label{app:proof_thm}
\begin{proof}[of Theorem \ref{thm:convergence}]
	The aim is to show is that the conditions of Lemma~\ref{lem:convergence_bounds} are met. We show that there exists a constant $b \in \mathbb{R}_{>0}$, such that, whenever $|u_k - u_{k+1}^*| \geq b$, we have that $|u_{k+1} - u_{k+1}^*| = |u_k - u_{k+1}^*| - \Delta_u$ for any $k \geq 1$ under the conditions of the theorem.
	
	The proof consists of two steps. First, we show that there exists a positive constant $d \in \mathbb{R}_{>0}$ and two functions $\alpha_1, \alpha_2 \in \mathcal{K}$ (i.e., any function $\alpha: \mathbb{R}_{\geq 0} \rightarrow \mathbb{R}_{\geq 0}$ in the $\mathcal{K}$-class is continuous, zero at zero, and strictly increasing), such that, for all $\nu \in (0,\nu^*]$, all $\lambda^* \in (0,\lambda^{**}]$, all $\lambda \in (0,\lambda^*]$, any $M \in \{0,1\}$, and any $k \geq 1$, we have
	\begin{equation} \label{eq:proof_conv_bound1}
		\left|\hat{\mu}_{k+1}^{(i)} - f_{k+1}(u^{(i)}) \right| \leq d
	\end{equation}
	for all $i \in \{\iota_k, \iota_{k-1}\}$. In addition,
	\begin{equation} \label{eq:proof_conv_bound2}
		\left|h_{k+1}^{(i)} - l_{k+1}^{(i)}\right| \leq \alpha_1(\lambda^*) + \alpha_2(\lambda^*)|u_k - u_{k+1}^*|,
	\end{equation}    
	for all $i \in \{\iota_k-1, \iota_k, \iota_k+1 \}$. The bounds \eqref{eq:proof_conv_bound1} and \eqref{eq:proof_conv_bound2} hold under the assumption that we have continuously perturbed until that point in time (i.e., $|u_{r+1} - u_r| = \Delta_u$ for all $r \in \{0,1,\dots,k-1\}$). Here, $l_{k+1}^{(i)}$ is defined by
	\begin{equation} \label{eq:proof_conv_defn_l}
		l_{k+1}^{(i)} = \left\{ \begin{aligned}        
			& \hat{\mu}_{k+1}^{(i)}, & & \mbox{ if $i \in \{\iota_k, \iota_{k-1}\}$,} \\ 
			&  2\hat{\mu}_{k+1}^{(\iota_k)} - \hat{\mu}_{k+1}^{(\iota_{k-1})}, & & \mbox{otherwise,}        
		\end{aligned}\right.
	\end{equation}
	for all $i \in \{\iota_k-1, \iota_k, \iota_k+1 \}$, and $\lambda^{**} \in (0,1)$ is a constant. Second, we show that, for a sufficiently small value of $\lambda^* \in (0,\lambda^{**}]$, we always perturb and that the bounds in \eqref{eq:proof_conv_bound1} and \eqref{eq:proof_conv_bound2} imply that there exists a constant $b \in \mathbb{R}_{>0}$, such that, whenever $|u_k - u_{k+1}^*| \geq b$, we have that $|u_{k+1} - u_{k+1}^*| = |u_k - u_{k+1}^*| - \Delta_u$ for any $k \geq 1$ under the conditions of the theorem. Thus, the conditions of Lemma~\ref{lem:convergence_bounds} are met.
	
	\textit{Step 1}:
	We first show that the bound in \eqref{eq:proof_conv_bound1} holds for all $\lambda \in (0,\lambda^*]$, all $\lambda^*\in(0,\lambda^{**}]$ (which implies that \eqref{eq:proof_conv_bound1} holds for all $\lambda \in (0,\lambda^{**}]$ ), all $i \in \{\iota_k, \iota_{k-1}\}$, and any $M \in \{0,1\}$. Let us set $\lambda^{**} = 0.5$. Consider any $k \geq 1$ and assume that $|u_{r+1} - u_r| = \Delta_u$ for all $r \in \{0,1,\dots,k-1\}$. Under Assumption~\ref{assum:time_change}, we obtain from \eqref{eq:yk} that
	\begin{equation} \label{eq:proof_conv_measurement}
		|y_j - f_{k+1}(u_j)| \leq \rho + (k+1-j)L_k
	\end{equation}
	for all $j \in \{0,1,\dots,k\}$. By combining \eqref{eq:proof_conv_measurement} and the definition of the mean in \eqref{eq:mean}, we obtain
	\begin{align} \nonumber
		&\left| \hat{\mu}_{k+1}^{(i)} - f_{k+1}(u^{(i)}) \right| = \left| \frac{\sum_{j \in \mathcal{J}_k^{(i)}}\omega_{j,k+1} \left(y_j - f_{k+1}(u^{(i)}) \right)}{\sum_{j \in \mathcal{J}_k^{(i)}}\omega_{j,k+1}}  \right| \\ \nonumber
		\quad &\leq \rho + L_k \frac{\sum_{j \in \mathcal{J}_k^{(i)}} (k+1 - j)\omega_{j,k+1} }{\sum_{j \in \mathcal{J}_k^{(i)}}\omega_{j,k+1}} \\  \label{eq:proof_conv_mean_bound1}
		\quad & \leq \rho + L_k \frac{\sum_{j = -\infty}^{p_k^{(i)}} (k+1 - j)\omega_{j,k+1} }{\omega_{p_k^{(i)},k+1}}
	\end{align}
	for all $i$ for which $\mathcal{J}_k^{(i)} \neq \emptyset$; see the definition of $p_k^{(i)}$ in \eqref{eq:p_k} for the last inequality. For $M = 0$, we have that $\omega_{j,k} = \lambda^{k-j}$; see \eqref{eq:omega}. It follows from \eqref{eq:proof_conv_mean_bound1} that
	\begin{align} \nonumber
		\left| \hat{\mu}_{k+1}^{(i)} - f_{k+1}(u^{(i)}) \right| &\leq \rho + L_k \frac{\sum_{j = -\infty}^{p_k^{(i)}} (k+1 - j)\lambda^{k+1-j} }{\lambda^{k+1-p_{k}^{(i)}}}\\ \nonumber
		&= \rho + L_k \sum_{r = 0}^\infty (k+1 - p_k^{(i)} + r) \lambda^r\\ \nonumber
		&= \rho + L_k \left( \frac{k+1-p_k^{(i)}}{1-\lambda} + \frac{\lambda}{(1-\lambda)^2} \right)\\ \label{eq:proof_conv_mean_bound2}
		& \leq \rho + \left( 2(k+1-p_k^{(i)}) + 2\right) L_k
	\end{align}
	for all $\lambda \in (0,\lambda^{**}]$ and all $i$ for which $\mathcal{J}_k^{(i)} \neq \emptyset$, where we have used that $\lambda^{**} = 0.5$, $\sum_{r = 0}^\infty \lambda^r = \frac{1}{1-\lambda}$, $\sum_{r = 0}^\infty r \lambda^r = \frac{\lambda}{(1 - \lambda)^2}$, $p_k^{(\iota_k)} = k$, and $p_k^{(\iota_{k-1})} = {k-1}$. Similarly, for $M = 1$, we have that $\omega_{j,k} = \left(1 + \ln\left( \frac{1}{\lambda}(k-j) \right) \right)\lambda^{k-j}$, such that
	{\small{\begin{align} \nonumber
				&\left| \hat{\mu}_{k+1}^{(i)} - f_{k+1}(u^{(i)}) \right| \leq \\ \nonumber
				&\leq \rho + L_k \frac{\sum_{j = -\infty}^{p_k^{(i)}} \left(k+1 - j + \ln\left( \frac{1}{\lambda} \right) (k+1-j)^2\right)\lambda^{k+1-j} }{\left(1 + \ln\left( \frac{1}{\lambda} \right) (k+1-p_k^{(i)})\right)\lambda^{k+1-p_{k}^{(i)}}}\\ \nonumber
				&= \rho + \frac{L_k}{1 + \ln\left( \frac{1}{\lambda} \right) (k+1-p_k^{(i)})}\\ \nonumber
				&\times \sum_{r=0}^{\infty} \left(k+1 - p_k^{(i)} + r + \ln\left( \frac{1}{\lambda} \right) (k+1-p_k^{(i)} + r)^2\right)\lambda^{k+1-j} \\ \nonumber 			
				&= \rho + L_k \frac{ \left(k+1 - p_k^{(i)} + \ln\left( \frac{1}{\lambda} \right) (k+1-p_k^{(i)} )^2\right)\frac{1}{1-\lambda} }{1 + \ln\left( \frac{1}{\lambda} \right) (k+1-p_k^{(i)})}\\ \nonumber
				& \quad + L_k \frac{\left(1 + 2\ln\left( \frac{1}{\lambda} \right) (k+1-p_k^{(i)})\right) \frac{\lambda}{(1-\lambda)^2} + \ln\left( \frac{1}{\lambda} \right) \frac{\lambda(1+\lambda)}{(1-\lambda)^3}}{1 + \ln\left( \frac{1}{\lambda} \right) (k+1-p_k^{(i)})}\\ \nonumber
				&\leq \rho + L_k\left( \frac{2\left(k+1-p_k^{(i)}\right)}{1-\lambda} + \frac{3\lambda}{(1-\lambda)^2} + \ln\left( \frac{1}{\lambda} \right) \frac{\lambda(1+\lambda)}{(1-\lambda)^3} \right)\\ 
				&\leq \rho + \left( 4(k+1-p_k^{(i)}) + 12 \right) L_k \label{eq:proof_conv_mean_bound3}
	\end{align}}}
	for all $\lambda \in (0,\lambda^{**}]$ and all $i$ for which $\mathcal{J}_k^{(i)} \neq \emptyset$. Here, we have used that $\lambda^{**} = 0.5$, $\sum_{r = 0}^\infty \lambda^r = \frac{1}{1-\lambda}$, $\sum_{r = 0}^\infty r \lambda^r = \frac{\lambda}{(1 - \lambda)^2}$, $\sum_{r = 0}^\infty r^2 \lambda^r = \frac{\lambda(1+\lambda)}{(1 - \lambda)^3}$, and that $\ln\left( \frac{1}{\lambda} \right) \lambda \leq 0.5$ for all $\lambda \in (0,\lambda^*]$. Noting that $p_k^{(\iota_k)} = k$ and $p_k^{(\iota_{k-1})} = {k-1}$, we obtain from \eqref{eq:proof_conv_mean_bound2} and \eqref{eq:proof_conv_mean_bound3} that the bound in \eqref{eq:proof_conv_bound1} holds with $d = \rho + 20 L_k$ for all $i \in \{\iota_k,\iota_{k-1}\}$, all $\lambda \in (0,\lambda^{**}]$, and any $M \in \{0,1\}$.
	
	Next, we show that the bound in \eqref{eq:proof_conv_bound2} holds for $i \in \{\iota_k-1,\iota_k,\iota_k+1\}$, all $\nu \in (0,\nu^*]$, all $\lambda^* \in (0,\lambda^{**}]$ (with $\lambda^{**} = 0.5$), all $\lambda \in (0,\lambda^*]$, and any $M \in \{0,1\}$. Again, consider any $k \geq 1$ and assume that $|u_{r+1} - u_r| = \Delta_u$ for all $r \in \{0,1,\dots,k-1\}$. Note that, if $\mathcal{J}_k^{(\iota_k+1)} = \emptyset$ or if $\mathcal{J}_k^{(\iota_k-1)} = \emptyset$, it follows from \eqref{eq:h_three_points} and \eqref{eq:proof_conv_defn_l} that $h_{k+1}^{(i)} = l_{k+1}^{(i)}$ for all $i \in \{\iota_k-1,\iota_k,\iota_k+1\}$, which implies that the bound in \eqref{eq:proof_conv_bound2} is satisfied for any functions $\alpha_1,\alpha_2 \in \mathcal{K}$. What remains to be shown is that the bound \eqref{eq:proof_conv_bound2} holds for some functions $\alpha_1,\alpha_2 \in \mathcal{K}$ if all three points with indices $i \in \{\iota_k-1,\iota_k,\iota_k+1\}$ have been measured before (when the third condition in \eqref{eq:h_three_points} is valid). Note that $\iota_{k-1} \in\{ \iota_k-1,\iota_k+1 \}$ (i.e., the index of $u_{k-1} = u^{(\iota_{k-1})}$) because we have continuously perturbed until now. Now, let us denote index of the third point (that is not equal to $\iota_k$ or $\iota_{k-1}$) by $\iota_{k-q_k} \in\{ \iota_k-1,\iota_k+1 \}$. Note that $\iota_{k-q_k}$ is index of the point for which the most recent measurement is oldest. Because the input can only move one step on the input grid at each time step, we always have that $q_k \geq 3$, which implies that $p_k^{(\iota_{k-q_k})} \leq k - 3$. It follows from \eqref{eq:h_three_points} that
	\begin{align} \nonumber
		\left| h_{k+1}^{(\iota_k)} - l_{k+1}^{(\iota_k)} \right| =&  \frac{2 \hat{\Sigma}_{k+1}^{(\iota_k)}}{(\nu\rho)^2 + \hat{\Sigma}_{k+1}^{(\iota_{k-1})}   + 4 \hat{\Sigma}_{k+1}^{(\iota_k)} + \hat{\Sigma}_{k+1}^{(\iota_{k-q_k})}} \\ & \times  \left|\hat{\mu}_{k+1}^{(\iota_{k-1})} - 2 \hat{\mu}_{k+1}^{(\iota_k)}  + \hat{\mu}_{k+1}^{(\iota_{k-q_k})} \right|, \label{eq:proof_conv_h_bound1}
	\end{align}
	\begin{align} \nonumber
		\left| h_{k+1}^{(\iota_{k-1})} - l_{k+1}^{(\iota_{k-1})} \right|  = &  \frac{\hat{\Sigma}_{k+1}^{(\iota_{k-1})}}{(\nu\rho)^2 + \hat{\Sigma}_{k+1}^{(\iota_{k-1})}  + 4 \hat{\Sigma}_{k+1}^{(\iota_k)} + \hat{\Sigma}_{k+1}^{(\iota_{k-q_k})}} \\ & \times \left|\hat{\mu}_{k+1}^{(\iota_{k-1})} - 2 \hat{\mu}_{k+1}^{(\iota_k)}  + \hat{\mu}_{k+1}^{(\iota_{k-q_k})} \right| \label{eq:proof_conv_h_bound2}
	\end{align}
	and
	\begin{align} \label{eq:proof_conv_h_bound3}
		&\left| h_{k+1}^{(\iota_{k-q_k})} - l_{k+1}^{(\iota_{k-q_l})} \right| = \left|\hat{\mu}_{k+1}^{(\iota_k-1)} - 2 \hat{\mu}_{k+1}^{(\iota_k)}  + \hat{\mu}_{k+1}^{(\iota_k+1)} \right| \\ \nonumber 
		&\quad \times   \left( 1 - \frac{ \hat{\Sigma}_{k+1}^{(\iota_{k-q_k})}}{(\nu\rho)^2 + \hat{\Sigma}_{k+1}^{(\iota_{k-1})}  + 4 \hat{\Sigma}_{k+1}^{(\iota_k)} + \hat{\Sigma}_{k+1}^{(\iota_{k-q_k})}}
		\right)  \\ \nonumber
		& = \left|\hat{\mu}_{k+1}^{(\iota_{k-1})} - 2 \hat{\mu}_{k+1}^{(\iota_k)}  + \hat{\mu}_{k+1}^{(\iota_{k-q_k})} \right| \\ \nonumber & \quad \times \frac{ (\nu\rho)^2  + 4 \hat{\Sigma}_{k+1}^{(\iota_k)} + \hat{\Sigma}_{k+1}^{(\iota_{k-1})} }{(\nu\rho)^2 + \hat{\Sigma}_{k+1}^{(\iota_{k-1})}  + 4 \hat{\Sigma}_{k+1}^{(\iota_k)} + \hat{\Sigma}_{k+1}^{(\iota_{k-q_k})}} .
	\end{align}
	We obtain from the equations in \eqref{eq:proof_conv_h_bound1}-\eqref{eq:proof_conv_h_bound3} that
	\begin{align} \label{eq:proof_conv_h_bound4}
		&\left| h_{k+1}^{(i)} - l_{k+1}^{(i)} \right| \leq \frac{ (\nu\rho)^2  + 4 \hat{\Sigma}_{k+1}^{(\iota_k)} + \hat{\Sigma}_{k+1}^{(\iota_{k-1})} }{(\nu\rho)^2 + \hat{\Sigma}_{k+1}^{(\iota_k-1)}  + 4 \hat{\Sigma}_{k+1}^{(\iota_k)} + \hat{\Sigma}_{k+1}^{(\iota_k+1)}} \\ \nonumber & 
		\quad\times \left|\hat{\mu}_{k+1}^{(\iota_{k-1})} - 2 \hat{\mu}_{k+1}^{(\iota_k)}  + \hat{\mu}_{k+1}^{(\iota_{k-q_k})} \right| \\ \nonumber
		&\leq \frac{ (\nu\rho)^2  + 4 \hat{\Sigma}_{k+1}^{(\iota_k)} + \hat{\Sigma}_{k+1}^{(\iota_{k-1})} }{\hat{\Sigma}_{k+1}^{(\iota_{k-q_k})}} \left|\hat{\mu}_{k+1}^{(\iota_{k-1})} - 2 \hat{\mu}_{k+1}^{(\iota_k)}  + \hat{\mu}_{k+1}^{(\iota_{k-q_k})} \right|
	\end{align}
	for all $i \in \{ \iota_k-1, \iota_k,\iota_k+1 \}$ (or, alternatively, for all $i \in \{ \iota_k, \iota_{k-1},\iota_{k-q_k} \}$). To derive an upper bound on the right-hand side of the last inequality in \eqref{eq:proof_conv_h_bound4}, we need to derive upper and lower bounds on the variance-like variables $\hat{\Sigma}_{k+1}^{(i)}$ in \eqref{eq:variance} for $i \in \{ \iota_k, \iota_{k-1},\iota_{k-q_k} \}$. For $M = 0$, we have that $\omega_{j,k} = \lambda^{k-j}$; see \eqref{eq:omega}. It follows from the definition of $\hat{\Sigma}_{k+1}^{(i)}$ in \eqref{eq:variance} that 
	\begin{equation} \label{eq:proof_conv_h_bound5}
		\hat{\Sigma}_{k+1}^{(i)} \geq \frac{\rho^2}{\sum_{j = -\infty}^{p_k^{(i)}} \lambda^{k+1-j}} = \frac{\rho^2}{\lambda^{k+1-p_{k}^{(i)}} \sum_{r = 0}^\infty \lambda^r} = \frac{\rho^2(1 - \lambda)}{\lambda^{k+1-p_{k}^{(i)}}}
	\end{equation}
	and
	\begin{equation} \label{eq:proof_conv_h_bound6}
		\hat{\Sigma}_{k+1}^{(i)} \leq \frac{\rho^2}{\lambda^{k+1-p_{k}^{(i)}}};
	\end{equation}
	see \eqref{eq:p_k} for the definition of $p_{k}^{(i)}$. Similarly, for $M = 1$, we have from \eqref{eq:omega} that $\omega_{j,k} = \left( 1 + \ln\left( \frac{1}{\lambda} \right)(k-j)\right)\lambda^{k-j}$, in which case the following bounds on $\hat{\Sigma}_{k+1}^{(i)}$ can be derived:
	\begin{align} \label{eq:proof_conv_h_bound7}
		& \hat{\Sigma}_{k+1}^{(i)} \geq \frac{\rho^2}{\sum_{j = -\infty}^{p_k^{(i)}} \left( 1 + \ln\left( \frac{1}{\lambda} \right)(k+1-j) \right) \lambda^{k+1-j}}\\ \nonumber
		&= \frac{\rho^2}{ \lambda^{k+1-p_{k}^{(i)}} \sum_{r = 0}^{\infty} \left( 1 + \ln\left( \frac{1}{\lambda} \right)(k+1-p_k^{(i)}+r) \right) \lambda^{r}}\\ \nonumber
		&=\frac{\rho^2(1-\lambda)^2}{ \lambda^{k+1-p_{k}^{(i)}} \left( \left( 1 + \ln\left( \frac{1}{\lambda} \right)(k+1-p_k^{(i)}) \right) (1-\lambda)  + \ln\left( \frac{1}{\lambda} \right) \lambda \right)} \\ \nonumber
		& \geq \frac{\rho^2(1-\lambda)^2}{ \lambda^{k+1-p_{k}^{(i)}} \left( 1 + \ln\left( \frac{1}{\lambda} \right)(k+1-p_k^{(i)})\right)} 
	\end{align}
	and
	\begin{equation} \label{eq:proof_conv_h_bound8}
		\hat{\Sigma}_{k+1}^{(i)} \leq \frac{\rho^2}{\lambda^{k+1-p_{k}^{(i)}}\left( 1 + \ln\left( \frac{1}{\lambda} \right)(k+1-p_k^{(i)})\right)}.
	\end{equation}
	With the upper and lower bounds in \eqref{eq:proof_conv_h_bound5} and \eqref{eq:proof_conv_h_bound6}, we obtain
	\begin{align} \nonumber
		&	\frac{ (\nu\rho)^2  + 4 \hat{\Sigma}_{k+1}^{(\iota_k)} + \hat{\Sigma}_{k+1}^{(\iota_{k-1})} }{\hat{\Sigma}_{k+1}^{(\iota_{k-q_k})}} \leq \frac{\lambda^{k+1-p_{k}^{(\iota_{k-q_k})}}}{1-\lambda} \left( \nu^2 + \frac{4}{\lambda} + \frac{1}{\lambda^2} \right) \\
		&\quad \leq \frac{\lambda}{1-\lambda} \lambda^{k-3-p_{k}^{(\iota_{k-q_k})}}\left( \nu^2 + 5 \right) \label{eq:proof_conv_h_bound9}
	\end{align}
	if $M = 0$. In addition, it follows from \eqref{eq:proof_conv_h_bound7} and \eqref{eq:proof_conv_h_bound8} that
	\begin{align} \label{eq:proof_conv_h_bound10}
		&\frac{ (\nu\rho)^2  + 4 \hat{\Sigma}_{k+1}^{(\iota_k)} + \hat{\Sigma}_{k+1}^{(\iota_{k-1})} }{\hat{\Sigma}_{k+1}^{(\iota_{k-q_k})}} \leq \\ \nonumber
		&\frac{\lambda^{k+1-p_{k}^{(\iota_{k-q_k})}}\left( 1 + \ln\left( \frac{1}{\lambda} \right)(k+1-p_k^{(\iota_{k-q_k})})\right)}{(1-\lambda)^2}  \\ \nonumber & \times \left( \nu^2 + \frac{4}{\lambda\left(1 + \ln\left( \frac{1}{\lambda} \right) \right)} + \frac{1}{\lambda^2 \left(1 + 2 \ln\left( \frac{1}{\lambda} \right) \right)} \right) \\ \nonumber
		& \leq  \frac{\lambda}{(1-\lambda)^2} \lambda^{k-3-p_{k}^{(\iota_{k-q_k})}}\left( 3 + \frac{1}{2}(k-3-p_k^{(\iota_{k-q_k})})\right) \left( \nu^2 + 5 \right)
	\end{align}
	if $M = 1$. Here, we have used that $\lambda \ln\left(\frac{1}{\lambda} \right) \leq \frac{1}{2}$ for all $\lambda \in (0,1)$. Note that the bound in \eqref{eq:proof_conv_h_bound10} is larger than the one in \eqref{eq:proof_conv_h_bound9}. Therefore, it is valid for both $M = 0$ and $M = 1$. Using Assumptions~\ref{assum:curvature} and \ref{assum:time_change}, we obtain the following bound:
	\begin{align} \nonumber 
		&\left|\hat{\mu}_{k+1}^{(\iota_{k-1})} - 2 \hat{\mu}_{k+1}^{(\iota_k)}  + \hat{\mu}_{k+1}^{(\iota_{k-q_k})} \right|  \leq \left|f_{k+1}(u^{(\iota_{k -1})}) - f_{k+1}(u^{(\iota_k)}) \right| \\ \nonumber
		& \quad + \left| f_{k+1}(u^{(\iota_{k-q_k})})  - f_{k+1}(u^{(\iota_k)}) \right| + \left| \hat{\mu}_{k+1}^{(\iota_{k-1})} - f_{k+1}(u^{(\iota_{k -1})}) \right| \\ \nonumber 
		& \quad + 2 \left| \hat{\mu}_{k+1}^{(\iota_k)} - f_{k+1}(u^{(\iota_k )}) \right| + \left| \hat{\mu}_{k+1}^{(\iota_{k-q_k})} - f_{k+1}(u^{(\iota_{k-q_k})}) \right| \\ \nonumber
		& \leq L_b \left|u^{(\iota_k-\frac{1}{2})} - u_{k+1}^* \right| + L_b \left|u^{(\iota_k+\frac{1}{2})} - u_{k+1}^* \right| + 4\rho \\ \nonumber 
		&\quad + 4\left( 2\left(k+1-p_k^{(\iota_{k-q_k})}\right) + 2\right) L_k \\ \nonumber
		& \leq 4\rho + 40 L_k + L_b \Delta_u + 8 \left(k-3-p_k^{(\iota_{k-q_k})}\right) L_k \\  
		& \quad + 2 L_b \left|u_k - u_{k+1}^* \right|. \label{eq:proof_conv_h_bound11}
	\end{align}
	Now, combining \eqref{eq:proof_conv_h_bound4}, \eqref{eq:proof_conv_h_bound9}, and \eqref{eq:proof_conv_h_bound11} yields
	{ \begin{align} \label{eq:proof_conv_h_boundf}
		&\left| h_{k+1}^{(i)} - l_{k+1}^{(i)} \right| \leq \frac{\lambda}{(1-\lambda)^2} \lambda^{k-3-p_{k}^{(\iota_{k-q_k})}} \\ \nonumber 
		& \quad\times \left( 3 + \frac{1}{2}(k-3-p_k^{(\iota_{k-q_k})})\right) \left( \nu^2 + 5 \right) \\ \nonumber
		& \quad\times \left( 4\rho + 40 L_k + L_b \Delta_u + 8 \left(k-3-p_k^{(\iota_{k-q_k})}\right) L_k \right. \\ \nonumber & \quad \left. + 2 L_b \left|u_k - u_{k+1}^* \right| \right) \\ \nonumber
		&= \frac{\lambda}{(1-\lambda)^2} \left( \nu^2 + 5 \right) \lambda^{k-3-p_{k}^{(\iota_{k-q_k})}} \\ \nonumber 
		&\quad\biggl( 12\rho + 120 L_k + 3L_b \Delta_u + \left( 44 L_k + 2\rho + \frac{L_b \Delta_u}{2} \right) \\ \nonumber 
		& \quad\left(k-3-p_k^{(\iota_{k-q_k})}\right) + 4 L_k\left(k-3-p_k^{(\iota_{k-q_k})}\right)^2 \\ \nonumber
		& \quad  + \left( 6 + k-3-p_k^{(\iota_{k-q_k})}\right) L_b \left|u_k - u_{k+1}^* \right| \biggr)\\ \nonumber
		&\leq \frac{\lambda}{(1-\lambda)^2} \left( \nu^2 + 5 \right) \Biggl( 12\rho + 120 L_k + 3L_b \Delta_u + \\ \nonumber 
		& \quad\left( 44 L_k + 2\rho + \frac{L_b \Delta_u}{2} \right) \frac{1}{\ln\left( \frac{1}{\lambda}\right) e} \\ \nonumber
		& \quad+ \frac{16 L_k}{\left( \ln\left( \frac{1}{\lambda}\right) e \right)^2}  + \left( 6 + \frac{1}{\ln\left( \frac{1}{\lambda}\right) e}\right) L_b \left|u_k - u_{k+1}^* \right| \Biggr) \\ \nonumber
		&\leq \frac{\lambda^*}{(1-\lambda^*)^2} \left( \left(\nu^*\right)^2 + 5 \right) \Biggl( 12\rho + 120 L_k + 3L_b \Delta_u \\ \nonumber 
		&\quad+ \left( 44 L_k + 2\rho + \frac{L_b \Delta_u}{2} \right) \frac{1}{\ln\left( \frac{1}{\lambda^*}\right) e} \\ \nonumber
		& \quad + \frac{16 L_k}{\left( \ln\left( \frac{1}{\lambda^*}\right) e \right)^2}  + \left( 6 + \frac{1}{\ln\left( \frac{1}{\lambda^*}\right) e}\right) L_b \left|u_k - u_{k+1}^* \right| \Biggr)
	\end{align}}
	for all $i \in \{ \iota_k-1, \iota_k,\iota_k+1 \}$, all $\nu \in (0,\nu^*]$, all $\lambda^* \in (0,\lambda^{**}]$, all $\lambda \in (0,\lambda^*]$, and any $M \in \{0,1\}$, where we have used that $r \lambda^r \leq \frac{1}{\ln\left( \frac{1}{\lambda}\right) e}$ and $r^2 \lambda^r \leq \frac{4}{\left( \ln\left( \frac{1}{\lambda}\right) e \right)^2}$ for all $r \in \mathbb{N}$. Note that the limits of the functions $\frac{\lambda^*}{(1-\lambda^*)^2}$ and $\frac{1}{\ln\left( \frac{1}{\lambda^*}\right)}$ in \eqref{eq:proof_conv_h_boundf} are zero as $\lambda^*$ approaches zero. Moreover, these functions are strictly increasing for $\lambda^* \in (0,\lambda^{**}]$. It follows from \eqref{eq:proof_conv_h_boundf} that the bound in \eqref{eq:proof_conv_bound2} holds for some function $\alpha_1,\alpha_2 \in \mathcal{K}$. This completes Step 1 of the proof.
	
	\textit{Step 2}: We first show that the conditions of Lemma~\ref{lem:convergence_bounds} hold for a sufficiently small value of $\lambda^*$ using the results of Step 1 under the assumption that we have perturbed until that point in time. Consider any $k \geq 1$ and assume that $|u_{r+1} - u_r| = \Delta_u$ for all $r \in \{0,1,\dots,k-1\}$. Note that either $u_{k+1}^* \leq u_k < u_{k-1}$, or $u_{k+1}^* \leq u_{k-1} < u_{k}$, or $u_k < u_{k-1} \leq u_{k+1}^*$, or $u_{k-1} < u_{k} \leq u_{k+1}^*$ holds. If $u_{k+1}^* \leq u_k < u_{k-1}$, then we obtain from Assumption~\ref{assum:curvature}, the bounds in \eqref{eq:proof_conv_bound1} and \eqref{eq:proof_conv_bound2}, and the definition in \eqref{eq:proof_conv_defn_l} that
	\begin{align} 
		& h_{k+1}^{(\iota_k-1)} - h_{k+1}^{(\iota_k)} \geq l_{k+1}^{(\iota_k - 1)} - l_{k+1}^{(\iota_k)}  \\ \nonumber 
		&\quad - 2\left( \alpha_1(\lambda^*) + \alpha_2(\lambda^*)|u_k - u_{k+1}^*| \right)\\ \nonumber
		&= \mu_{k+1}^{(\iota_k)} - \mu_{k+1}^{(\iota_{k-1})} - 2\left( \alpha_1(\lambda^*) + \alpha_2(\lambda^*)|u_k - u_{k+1}^*| \right) \\ \nonumber
		&\geq f_{k+1}(u_k) - f_{k+1}(u_{k-1}) - 2\left( \alpha_1(\lambda^*) \right. \\ \nonumber 
		& \quad \left. + \alpha_2(\lambda^*)|u_k - u_{k+1}^*| \right) - 2d \\ \nonumber
		&\geq L_b \left|u^{\left(\frac{\iota_k + \iota_{k-1}}{2}\right)} - u_{k+1}^*\right| \\ \nonumber 
		& \quad - 2\left( \alpha_1(\lambda^*) + \alpha_2(\lambda^*)|u_k - u_{k+1}^*| \right) - 2d \\ \nonumber
		&\geq (L_b - 2 \alpha_2(\lambda^*) ) \left|u_k - u_{k+1}^*\right| - 2 \alpha_1(\lambda^*) - 2d - \frac{L_b \Delta_u}{2},
	\end{align}
	and, using the same arguments, that
	\begin{align}
		&h_{k+1}^{(\iota_k)} - h_{k+1}^{(\iota_k+1)} \\ \nonumber 
		&\: \geq l_{k+1}^{(\iota_k)} - l_{k+1}^{(\iota_k+1)} - 2\left( \alpha_1(\lambda^*) + \alpha_2(\lambda^*)|u_k - u_{k+1}^*| \right)\\ \nonumber 
		&\: = \mu_{k+1}^{(\iota_k)} - \mu_{k+1}^{(\iota_{k-1})} - 2\left( \alpha_1(\lambda^*) + \alpha_2(\lambda^*)|u_k - u_{k+1}^*| \right) \\ \nonumber 
		&\: \geq (L_b - 2 \alpha_2(\lambda^*) ) \left|u_k - u_{k+1}^*\right| - 2 \alpha_1(\lambda^*) - 2d - \frac{L_b \Delta_u}{2}.
	\end{align}
	Assuming that $\lambda^*$ is sufficiently small such that $\lambda^* \leq \min\left\{\alpha_2^{-1}\left(\frac{L_b}{4}\right), \lambda^{**} \right\}$, we have that $h_{k+1}^{(\iota_k-1)} \geq h_{k+1}^{(\iota_k)} + 2\tau \geq h_{k+1}^{(\iota_k+1)} + 4 \tau$ whenever $\left|u_k - u_{k+1}^*\right| \geq b$, with $b = \frac{4}{L_b} \left( \alpha_1(\lambda^{**}) + d + \tau \right) + \Delta_u $. Subsequently, it follows from \eqref{eq:optimization1}-\eqref{eq:optimization2} that $u_{k+1} = u^{(\iota_k - 1)}$ whenever $|u_k - u_{k+1}^*| \geq b$, which implies that  $|u_{k+1} - u_{k+1}^*| = |u_k - u_{k+1}^*| - \Delta_u$ whenever $|u_k - u_{k+1}^*| \geq b$. A similar reasoning can be applied to show that $|u_{k+1} - u_{k+1}^*| = |u_k - u_{k+1}^*| - \Delta_u$ whenever $|u_k - u_{k+1}^*| \geq b$ for any of the three remaining conditions $u_{k+1}^* \leq u_{k-1} < u_{k}$, $u_k < u_{k-1} \leq u_{k+1}^*$, and $u_{k-1} < u_{k} \leq u_{k+1}^*$, assuming $\lambda^* \leq \min\left\{\alpha_2^{-1}\left(\frac{L_b}{4}\right), \lambda^{**} \right\}$. Hence, the conditions of Lemma~\ref{lem:convergence_bounds} are met for a sufficiently small value of $\lambda^*$ if we always perturb. 
	
	We show next that we always perturb for a sufficiently small value of $\lambda^*$ under the conditions of the theorem. Note that not perturbing implies that $u_{k+1} = u_k$. We show by contradiction that there does not exist a $k \in \mathbb{N}$ for which $u_{k+1} = u_k$ if $\lambda^*$ is sufficiently small. Let $k_1 \in \mathbb{N}$ be the first time instance for which $u_{k_1+1} = u_{k_1}$, if it exists. Obviously, we cannot have that $k_1 = 0$ because $u_1 \in \{u_0 - \Delta_u, u_0 + \Delta_u\}$. Hence, we must have that $k_1 \geq 1$. It follows from \eqref{eq:optimization1}-\eqref{eq:optimization2} that $u_{k_1+1} = u_{k_1}$ only if $h_{k_1+1}^{(\iota_{k_1})} \geq h_{k_1+1}^{(\iota_{k_1}-1)}$, $h_{k_1+1}^{(\iota_{k_1})} \geq h_{k_1+1}^{(\iota_{k_1}+1)}$, and $h_{k_1+1}^{(\iota_{k_1})} \geq h_{k_1+1}^{(\iota_{{k_1}-1})} + \tau$, where we note that $\iota_{{k_1}-1} \in \{\iota_{k_1}-1,\iota_{k_1}+1\}$. This implies that the condition
	\begin{equation} \label{eq:proof_conv_h_condition}
		h_{k_1+1}^{(\iota_{k_1}-1)} - 2h_{k_1+1}^{(\iota_{k_1})} + h_{k_1+1}^{(\iota_{k_1}+1)} \leq -\tau
	\end{equation}
	must hold if $k_1 \geq 1$ is the first time instance such that $u_{k_1+1} = u_{k_1}$. From the bound in \eqref{eq:proof_conv_bound2} and the definition in \eqref{eq:proof_conv_defn_l}, we obtain that
	\begin{align} \nonumber
		&\left|h_{k_1+1}^{(\iota_{k_1}-1)} - 2h_{k_1+1}^{(\iota_{k_1})} + h_{k_1+1}^{(\iota_{k_1}+1)}\right| \leq \left|l_{k_1+1}^{(\iota_{k_1}-1)} - 2l_{k_1+1}^{(\iota_{k_1})} \right. \\ \nonumber 
		& \quad \left. + l_{k_1+1}^{(\iota_{k_1}+1)}\right| + 4 \alpha_1(\lambda^*) + 4\alpha_2(\lambda^*)|u_k - u_{k+1}^*| \\
		& = 4 \alpha_1(\lambda^*) + 4\alpha_2(\lambda^*)|u_k - u_{k+1}^*|. \label{eq:proof_conv_h_bound}
	\end{align}
	Moreover, we have shown in the beginning of Step 2 that $|u_{k+1} - u_{k+1}^*| = |u_k - u_{k+1}^*| - \Delta_u$ whenever $|u_k - u_{k+1}^*| \geq b$ under the given conditions, which means that $k_1$ cannot be the first time instance for which $u_{k_1+1} = u_{k_1}$ if $|u_{k_1} - u_{k_1+1}^*| \geq b$. Thus, we must have that $|u_{k_1} - u_{k_1+1}^*| < b$. Therefore, it follows using \eqref{eq:proof_conv_h_bound} that
	\begin{equation} \label{eq:proof_conv_h_bound12}
		\left|h_{k_1+1}^{(\iota_{k_1}-1)} - 2h_{k_1+1}^{(\iota_{k_1})} + h_{k_1+1}^{(\iota_{k_1}+1)}\right| < 4 \alpha_1(\lambda^*) + 4\alpha_2(\lambda^*)b.
	\end{equation}
	If $\lambda^*$ is sufficiently small, in particular, $\lambda^* \leq \min\left\{ \alpha_1^{-1}\left( \frac{\tau}{8} \right), \alpha_2^{-1}\left( \frac{\tau}{8b} \right) \right\}$, then we obtain that $\left|h_{k_1+1}^{(\iota_{k_1}-1)} - 2h_{k_1+1}^{(\iota_{k_1})} + h_{k_1+1}^{(\iota_{k_1}+1)}\right| < \tau$ from the bound in \eqref{eq:proof_conv_h_bound12}. This contradicts the necessary condition in \eqref{eq:proof_conv_h_condition} for $k_1$ to exist. By letting $\lambda^* = \min\left\{ \alpha_1^{-1}\left( \frac{\tau}{8} \right), \alpha_2^{-1}\left( \frac{\tau}{8b} \right),\alpha_2^{-1}\left(\frac{L_b}{4}\right), \lambda^{**} \right\}$, we have that $\lambda^* \leq \min\left\{ \alpha_1^{-1}\left( \frac{\tau}{8} \right), \alpha_2^{-1}\left( \frac{\tau}{8b} \right) \right\}$, which implies that there does not exist a $k_1 \in \mathbb{N}$ such that $u_{k_1+1} = u_{k_1}$. Thus, we always perturb. In addition, because $\lambda^* \leq \min\left\{\alpha_2^{-1}\left(\frac{L_b}{4}\right), \lambda^{**} \right\}$, it follows from the first part of Step 2 that all conditions of Lemma~\ref{lem:convergence_bounds} are met. This completes the proof of the theorem.   
\end{proof}


\addtolength{\textheight}{-12cm}   



\bibliography{library.bib}

\end{document}